\tikzset{
  symbol/.style={
    draw=none,
    every to/.append style={
      edge node={node [sloped, allow upside down, auto=false]{$#1$}}}
  }
}
\tikzset{super thick/.style={line width=3pt}}
\tikzstyle{far>}=[decoration={markings, mark=at position 0.75 with {\arrow{>}}}, postaction={decorate}]
\tikzstyle{mid>}=[decoration={markings, mark=at position 0.55 with {\arrow{>}}}, postaction={decorate}]
\tikzstyle{mid<}=[decoration={markings, mark=at position 0.55 with {\arrow{<}}}, postaction={decorate}]
\tikzset{super thick/.style={line width=3pt}}
\tikzstyle{far>}=[decoration={markings, mark=at position 0.75 with {\arrow{>}}}, postaction={decorate}]
\tikzstyle{mid>}=[decoration={markings, mark=at position 0.55 with {\arrow{>}}}, postaction={decorate}]
\tikzstyle{mid<}=[decoration={markings, mark=at position 0.55 with {\arrow{<}}}, postaction={decorate}]
\tikzstyle{knot}=[preaction={super thick, white, draw}]
\tikzstyle{coupon}=[draw, very thick, rectangle, rounded corners=5pt]
\tikzset{Rightarrow/.style={double equal sign distance,>={Implies},->},
triplecd/.style={-,preaction={draw,Rightarrow}},
quadruplecd/.style={preaction={draw,Rightarrow,
shorten >=0pt
},
shorten >=1pt,
-,double,double
distance=0.2pt}}
\tikzset{
    tripleline/.style args={[#1] in [#2] in [#3]}{
        #1,preaction={preaction={draw,#3},draw,#2}
    }
}
\tikzstyle{triple}=[tripleline={[line width=.15mm,black] in
\tikzset{
    quadrupleline/.style args={[#1] in [#2] in [#3] in [#4]}{
        #1,preaction={preaction={preaction={draw,#4},draw,#3}, draw,#2}
    }
}
\tikzstyle{quadruple}=[quadrupleline={[line width=.3mm,white] in
\definecolor{violet}{RGB}{148,0,211}
\definecolor{DarkGreen}{RGB}{34,139,34}
\definecolor{rufous}{HTML}{A81C07}
\definecolor{medium-blue}{rgb}{0,0,.8}
\newcommand{\arxiv}[1]{\href{http://arxiv.org/abs/#1}{\tt arXiv:\nolinkurl{#1}}}
\newcommand{\arXiv}[1]{\href{http://arxiv.org/abs/#1}{\tt arXiv:\nolinkurl{#1}}}
\DeclareMathOperator{\tr}{tr}
\def\semicolon{;}
\def\applytolist#1{
    \expandafter\def\csname multi#1\endcsname##1{
        \def\multiack{##1}\ifx\multiack\semicolon
            \def\next{\relax}
        \else
            \csname #1\endcsname{##1}
            \def\next{\csname multi#1\endcsname}
        \fi
        \next}
    \csname multi#1\endcsname}
\def\calc#1{\expandafter\def\csname c#1\endcsname{{\mathcal #1}}}
\def\bbc#1{\expandafter\def\csname bb#1\endcsname{{\mathbb #1}}}
\def\bfc#1{\expandafter\def\csname bf#1\endcsname{{\mathbf #1}}}
\def\sfc#1{\expandafter\def\csname s#1\endcsname{{\sf #1}}}
\def\fc#1{\expandafter\def\csname f#1\endcsname{{\mathfrak #1}}}
\def\rmc#1{\expandafter\def\csname rm#1\endcsname{{\mathrm #1}}}
\numberwithin{equation}{section}
\theoremstyle{plain}
\newtheorem{thm}[equation]{Theorem}
\newtheorem*{thm*}{Theorem}
\newtheorem{cor}[equation]{Corollary}
\newtheorem{lem}[equation]{Lemma}
\newtheorem{prop}[equation]{Proposition}
\newtheorem{question}[equation]{Question}
\newtheorem*{claim*}{Claim}
\theoremstyle{definition}
\newtheorem{defn}[equation]{Definition}
\newtheorem*{trick*}{Trick}
\newtheorem{rem}[equation]{Remark}
\def\l@subsection#1#2{}
\def\l@subsubsection#1#2{}
\title{Quantum cellular automata and categorical dualities of spin chains}
\author{Corey Jones$^{1}$}
\address{$^{1}$ Department of Mathematics, North Carolina State University, Raleigh, NC 27695, USA}
\author{Kylan Schatz$^{1}$}
\author{Dominic J.~Williamson$^{2,*}$}
\thanks{${}^*$ Current address: IBM Quantum, IBM Almaden Research Center, San Jose, CA 95120, USA}
\address{$^{2}$ School of Physics, University of Sydney, Sydney, New South Wales 2006, Australia}
\begin{document}

\maketitle

\begin{abstract}
Dualities play a central role in the study of quantum spin chains, providing insight into the structure of quantum phase diagrams and phase transitions. In this work, we study categorical dualities, which are defined as bounded-spread isomorphisms between algebras of symmetry-respecting local operators on a spin chain. We consider generalized global symmetries that correspond to unitary fusion categories, which are represented by matrix-product operator algebras. A fundamental question about dualities is whether they can be extended to quantum cellular automata on the larger algebra generated by all local operators in the the unit matrix-product operator sector. For on-site representations of Hopf algebra symmetries, this larger algebra is the usual tensor product quasi-local algebra. 
We present a solution to the extension problem using the machinery of Doplicher-Haag-Roberts bimodules. 
Our solution provides a crisp categorical criterion for when an extension of a duality exists. We show that the set of possible extensions form a torsor over the invertible objects in the relevant symmetry category. As a corollary, we obtain a classification result concerning dualities in the group case.
\end{abstract}

\tableofcontents

\section{Introduction}

Quantum cellular automata (QCA) are a class of quantum operations that capture the fundamental properties of unitarity and locality~\cite{QC,QC2,QC3}. 
The problem of classifying QCA up to finite-depth unitary circuits has been the focus of much attention~\cite{MR2890305,https://doi.org/10.48550/arxiv.quant-ph/0405174,MR4381173,MR4103966,MR4309221,https://doi.org/10.48550/arxiv.2205.09141}, due to both the fundamental nature of QCA and their wide ranging applications. These include invertible quantum phases of matter~\cite{HFH2020,JonesMetlitski21}, floquet quantum dynamics~\cite{Nahum,PoChiral,PoRadical,PotterVishwanathFidkowski18,PotterMorimoto17,Zhang2021classification,aasen2023measurement}, many-body localization~\cite{Long2024Topo}, unitary tensor network operators~\cite{IgnacioCirac2017,Sahinoglu2018,Piroli2020,Piroli21Fermionic,Ranard20}, and simulations of quantum field theories~\cite{ARRIGHI2011372,QFT1,QFT2}.

Thus far, most work has focused on QCAs embodied by bounded-spread automorphisms on the standard quasi-local operator algebra of a spin system, $A=\otimes_{\mathbbm{Z}^{n}} M_{d}(\mathbbm{C})$, formed by taking the tensor product of finite qudit degrees of freedom on the sites of a lattice.
This has led to a full classification of QCAs in one spatial dimension up to finite-depth circuits ~\cite{MR2890305} and partial classifications in higher spatial dimensions~\cite{MR4103966,MR4309221,https://doi.org/10.48550/arxiv.2205.09141}. 
There is a closely related problem of understanding \textit{dualities}, which in this context are mathematically described by bounded-spread isomorphisms between \textit{subalgebras} of the usual quasi-local algebra that interchange quantum phases of matter. Questions in this direction have attracted much recent interest \cite{Aasen_2016,Thorngren2019,Choi2021,Kaidi2022,PRXQuantum.4.020357,2304.00068,JL24,Bhardwaj2024,ma2024quantumcellularautomatasymmetric}.

Dualities can arise naturally in the presence of a global symmetry $\mathcal{C}\curvearrowright A$, which may correspond to an action of a group, a (weak) Hopf algebra, or a fusion category. 
If a pair of Hamiltonians has symmetric local terms, then these lie in the algebra $A^{\mathcal{C}}\subseteq A$ of symmetric operators. 
In this work, we use the term \textit{duality} to formally denote a bounded-spread isomorphism between subalgebras of symmetric operators $A^{\mathcal{C}}$ under some global symmetry which exchange Hamiltonians \textit{but may not extend} to bounded-spread automorphisms of the whole quasi-local algebra $A$ (QCA). The motivating example for this class of dualities is due to Kramers and Wannier, originally introduced in the context of classical statistical mechanics ~\cite{Kramers1941}. 
The bounded-spread automorphism implementing KW duality acts on the symmetric subalgebra of operators on a chain of qubits that commute with the global $\mathbb{Z}_2$ symmetry represented by $\prod_i \sigma^{x}_i$. The action of the Kramers-Wannier (KW) duality can be specified on a set of generators of this $\mathbb{Z}_2$-symmetric operator algebra via the map $\sigma^{x}_i\mapsto \sigma^{z}_{i-1}\sigma^{z}_{i}$ and $ \sigma^{z}_{i}\sigma^{z}_{i+1}\mapsto \sigma^{x}_{i}$. 

In recent years, there has been significant interest in extending results from the setting of conventional global symmetries, which are represented by tensor products of single-site unitary operators, to categorical global symmetries, which are represented by matrix product operators (MPOs)~\cite{PhysRevLett.93.070601,MR3614057,Vanhove2018,Aasen_2016,Thorngren2019}.  In this work, we study this flavor of duality in one spatial dimension. The resulting algebra generated by symmetric local operators (preserving the unit MPO sector) is called a \textit{fusion spin chain}.
Recently, the idea of KW duality has been generalized in this direction to categorical symmetries, where matrix-product operators have been found to implement dualities based on any invertible bimodule categories between fusion categories~\cite{Aasen_2016,Williamson2017SET,https://doi.org/10.48550/arxiv.2008.08598,PRXQuantum.4.020357}. This suggests a deep connection between categorical dualities and the algebraic structure of fusion categories. 

A key feature of non-trivial dualities following the KW paradigm is that bounded-spread isomorphisms of the symmetric operators \textit{should not be} locally extendable to the whole spin chain (or in the case of non-unital MPOs, to the local edge-restricted sector). 
We call such an extension a \textit{spatial implementation.} 
Spatial implementations are themselves ordinary 1+1D QCAs in the case of group or Hopf algebra symmetries. 
More generally, spatial implementations are \textit{edge-restricted} QCA. 
These extensions lead naturally to the following question, whose answer sheds light on the relationship between QCA and categorical dualities:

\begin{question}\label{Original question}
 When do bounded-spread isomorphisms between symmetric local operator algebras (i.e. categorical dualities) extend to QCA defined on the full (or edge-restricted) local operator algebras, and how are the possible extensions characterized?
\end{question}

To answer this question, we first establish a precise reformulation in terms of \textit{abstract fusion spin chains} $A(\mathcal{E},X)$, built from a fusion category $\mathcal{E}$ and an object $X\in \mathcal{E}$,  by defining local algebras as endomorphism algebras of tensor powers of $X$ (see Section \ref{sec:fusionspinchians}). A fusion category $\mathcal{C}$ acting on a spin chain via MPOs is specified by the data of an indecomposable right $\mathcal{C}$-module category $\mathcal{M}$, and an object $X\in \mathcal{C}^{*}_{\mathcal{M}}$. Then, the symmetric operator algebra is isomorphic to the fusion spin chain $A(\mathcal{C}^{*}_{\mathcal{M}}, X)$. If we view $\mathcal{M}$ as a left $\mathcal{C}^{*}_{\mathcal{M}}$-module category, then rewriting $\mathcal{E}:=\mathcal{C}^{*}_{\mathcal{M}}$ we can reformulate the data of a categorical symmetry as an abstract fusion spin chain $A(\mathcal{C}, X)$ together with an indecomposable left $\mathcal{C}$-module category $\mathcal{M}$. The inclusion $A(\mathcal{E}, X)\hookrightarrow A(\mathcal{E}, X)_{\mathcal{M}}$ (see Definition~\ref{spatialrealization}) corresponds to the inclusion of symmetric operators into all (edge-restricted) local operators, which we call a \textit{spatial realization} of the chain $A(\mathcal{E}, X)$. Question \ref{Original question} can be given a precise mathematical formulation as follows: 

\begin{question}\label{reformulated question}
Given a bounded-spread isomorphism $\alpha: A(\mathcal{C}, X)\rightarrow A(\mathcal{D}, Y)$ between abstract fusion spin chains, when does it extend to a QCA between spatial realizations $A(\mathcal{C}, X)_{\mathcal{M}}\rightarrow A(\mathcal{D}, Y)_{\mathcal{N}}$, and how are these extensions characterized?
\end{question}

In this form, we can address the problem using the machinery of Doplicher-Haag-Roberts (DHR) bimodules, introduced in \cite{2304.00068} and generalizing the earlier work of \cite{MR1463825}. Associated to an abstract spin system $A$ is a braided C*-tensor category $\text{DHR}(A)$, and associated to any bounded-spread isomorphism $\alpha: A\rightarrow B$ is a braided equivalence $\text{DHR}(\alpha): \text{DHR}(A)\cong \text{DHR}(B)$. Intuitively, if we view an abstract spin chain as operators living at a cut boundary of a 2+1D topologically ordered spin system (as in \cite{2307.12552}) then DHR bimodules capture the structure of the bulk topological order (see also \cite{PhysRevB.107.155136, inamura202321dsymmetrytopologicalorderlocalsymmetric}). For a fusion spin chain $A(\mathcal{C}, X)$, there is a braided equivalence $\text{DHR}(A(\mathcal{C}, X))\cong \mathcal{Z}(\mathcal{C})$, where the latter denotes the Drinfeld center of $\mathcal C$. 
We now set up the statement of our main theorem. 
For any indecomposable $\mathcal{C}$-module category $\mathcal{M}$, let $Z(\mathcal{M})\in \mathcal{Z}(\mathcal{C})$ denote the canonically associated Lagrangian algebra \cite{MR2669355}. Then we have the following (which is a special case of Theorem \ref{mainthmalt}):

\begin{thm}\label{mainthm}
    Let $\alpha: A(\mathcal{C}, X)\rightarrow A(\mathcal{D}, Y)$ be a bounded-spread isomorphism between abstract fusion spin chains. Then for any indecomposable module categories $\mathcal{M}$ and $\mathcal{N}$ of $\mathcal{C}$ and $\mathcal{D}$ respectively, spatial implementations $A(\mathcal{C}, X)_{\mathcal{M}}\rightarrow A(\mathcal{D}, Y)_{\mathcal{N}}$ of $\alpha$ are in bijective correspondence with algebra isomorphisms $\text{DHR}(\alpha)(Z(\mathcal{M}))\cong Z(\mathcal{N})$ in $\mathcal{Z}(\mathcal{D})$. As a consequence:

    \medskip
    
    \begin{enumerate}
    \item 
    If $\text{DHR}(\alpha)(Z(\mathcal{M}))$ is not isomorphic to $Z(\mathcal{N})$, then $\alpha$ has no spatial implementation.

    \medskip
    
    \item 
    If $\text{DHR}(\alpha)(Z(\mathcal{M}))\cong Z(\mathcal{N})$, then the spatial implementations of $\alpha$ form a torsor over $\text{Aut}(Z(\mathcal{M}))\cong \text{Inv}(\mathcal{C}^{*}_{\mathcal{M}})$
    \end{enumerate}
\end{thm}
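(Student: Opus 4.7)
The plan is to exploit the functoriality of the DHR construction together with an explicit dictionary between spatial realizations and Lagrangian algebras in the Drinfeld center. The fundamental input is that the inclusion $A(\mathcal{C},X) \hookrightarrow A(\mathcal{C},X)_{\mathcal{M}}$ is governed entirely by the Lagrangian algebra $Z(\mathcal{M}) \in \mathcal{Z}(\mathcal{C}) \cong \text{DHR}(A(\mathcal{C},X))$: intuitively, the spatial realization is a categorical condensation of $Z(\mathcal{M})$ in the symmetric chain. This reduces the extension problem for $\alpha$ to a question about transport of algebra structures across the braided equivalence $\text{DHR}(\alpha)$, which is the heart of the argument.

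For the forward direction, I would observe that a spatial implementation $\tilde{\alpha}:A(\mathcal{C},X)_{\mathcal{M}} \to A(\mathcal{D},Y)_{\mathcal{N}}$ is itself a bounded-spread isomorphism between spin systems, so it induces a braided equivalence $\text{DHR}(\tilde{\alpha})$. Using naturality of DHR under inclusions, the pullbacks of the ``trivial'' (condensed) Lagrangian algebras of $A(\mathcal{C},X)_{\mathcal{M}}$ and $A(\mathcal{D},Y)_{\mathcal{N}}$ to the DHR categories of their symmetric subalgebras are precisely $Z(\mathcal{M})$ and $Z(\mathcal{N})$. Since $\tilde{\alpha}$ restricts to $\alpha$, the resulting naturality square forces $\text{DHR}(\alpha)$ to carry $Z(\mathcal{M})$ to an object canonically isomorphic to $Z(\mathcal{N})$ as an algebra object in $\mathcal{Z}(\mathcal{D})$, giving the required isomorphism.

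The hard step is the converse: given an algebra isomorphism $\phi:\text{DHR}(\alpha)(Z(\mathcal{M})) \cong Z(\mathcal{N})$, one must construct an actual spatial implementation. My plan is to reconstruct $A(\mathcal{C},X)_{\mathcal{M}}$ intrinsically from the pair $(A(\mathcal{C},X), Z(\mathcal{M}))$ via a categorical completion along the Lagrangian algebra -- presenting the spatial realization, up to bounded-spread equivalence, as a Q-system style extension determined by the Lagrangian algebra data -- and then transport this extension along $\alpha$ using $\phi$ to obtain an honest bounded-spread isomorphism into $A(\mathcal{D},Y)_{\mathcal{N}}$. I expect the main obstacle here to be the passage from a purely categorical isomorphism of algebras in the center to a concrete locally-supported map of operator algebras: one must show both that the spatial realization depends only on the Lagrangian algebra (uniqueness up to bounded-spread equivalence) and that categorical isomorphisms can always be implemented spatially in a manner compatible with the module category structure baked into Definition~\ref{spatialrealization}.

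Once bijectivity is established, the remaining claims are formal. Statement (1) is the contrapositive of the forward direction. For statement (2), whenever the set of algebra isomorphisms between two objects in a semisimple category is nonempty it is a torsor over the automorphism group of either side, so the bijection transports this to a torsor structure on spatial implementations over $\text{Aut}(Z(\mathcal{M}))$. The final identification $\text{Aut}(Z(\mathcal{M})) \cong \text{Inv}(\mathcal{C}^{*}_{\mathcal{M}})$ is a standard fact about Lagrangian algebras attached to indecomposable module categories, which I would invoke from the literature on braided fusion categories rather than reprove.
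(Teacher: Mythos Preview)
Your overall strategy matches the paper's, but you are making the converse direction harder than it is and are missing the one genuine technical ingredient.

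The paper's key observation is that the spatial realization $A(\mathcal{C},X)_{\mathcal{M}}\cong A(\mathcal{C}_{Z(\mathcal{M})},X)$, equipped with its canonical conditional expectation, \emph{is} a DHR bimodule over $A(\mathcal{C},X)$, and in fact is the Q-system realization $|Z(\mathcal{M})|$ of the Lagrangian algebra inside $\text{Bim}(A(\mathcal{C},X))$. Once you see this, there is nothing to ``construct'' in the converse: an algebra isomorphism $\text{DHR}(\alpha)(Z(\mathcal{M}))\cong Z(\mathcal{N})$ in the bimodule category is \emph{literally} a $*$-algebra isomorphism $A(\mathcal{C},X)_{\mathcal{M}}\to A(\mathcal{D},Y)_{\mathcal{N}}$ that restricts to $\alpha$ on the subalgebra, because $\text{DHR}(\alpha)$ just twists the bimodule actions by $\alpha^{-1}$. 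So your anticipated ``passage from a purely categorical isomorphism to a concrete map of operator algebras'' is a non-issue: the categorical morphism and the operator-algebra map are one and the same linear map. Likewise the forward direction requires no naturality square for $\text{DHR}(\widetilde{\alpha})$ on the larger algebras; one simply checks directly that the extension $\widetilde{\alpha}$, viewed as a map of vector spaces, is an $A(\mathcal{D},Y)$-bimodule intertwiner from $\text{DHR}(\alpha)(|Z(\mathcal{M})|)$ to $|Z(\mathcal{N})|$.

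The only nontrivial point, which your proposal does not address, is why the resulting $*$-isomorphism has \emph{bounded spread}. The paper handles this with a relative Haag duality lemma (Lemma~\ref{Haagdualitylem}): for sufficiently large intervals $I$, the local algebra $A(\mathcal{C}_{L},X)_{I}$ coincides with the commutant of $A(\mathcal{C},X)_{I^{c}}$ inside the full extension. Since $\widetilde{\alpha}$ restricts to $\alpha$ (which has spread $R$), anything in $A(\mathcal{C}_{L},X)_{I}$ commutes with $A(\mathcal{C},X)_{I^{c}}$, hence its image commutes with $A(\mathcal{D},Y)_{(I^{+R})^{c}}$, and the lemma forces it into $A(\mathcal{D}_{K},Y)_{I^{+R}}$. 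This lemma is the actual content you need to supply; everything else in your plan is either already tautological or, in the case of $\text{Aut}(Z(\mathcal{M}))\cong\text{Inv}(\mathcal{C}^{*}_{\mathcal{M}})$, correctly deferred to the literature.
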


For the Kramers-Wannier example described above, the fusion category is $\mathcal{C}=\mathcal{D}=\text{Rep}(\mathbbm{Z}_{2})$, $X$~is the left regular representation of $\mathbbm{Z}_{2}$, and the $\text{DHR}$ category of the symmetric spin chain is the toric code modular tensor category $\mathcal{Z}(\text{Hilb}_{f.d.}(\mathbbm{Z}_2))$. The module category in question is the standard fiber functor, which corresponds to the electric Lagrangian $L=1\oplus e$. The induced action of the Kramer-Wannier automorphism $\alpha$ is the $e\leftrightarrow m$ swap symmetry, which does not preserve $L$. 
By the above theorem, $\alpha$ admits \textit{no} QCA extension to the whole quasi-local algebra $A=\otimes_{\mathbbm{Z}} M_{2}(\mathbbm{C})$.

\medskip

We also use our formalism to make contact with the literature concerning symmetric QCA~(sQCA). In this setting, we consider a finite gloabl symmetry group $G$ acting faithfully by on-site unitaries in a self-dual representation. Then the symmetric operators $A^{G}$ have DHR category equivalent to the Drinfeld center $\mathcal{Z}(\text{Rep}(G))$. 
We immediately recover an $H^{2}(G,U(1))$ index (c.f. \cite{PhysRevLett.124.100402}) as a consequence of our framework (see Theorem~\ref{thm:sQCA}).

In \cite{JL24}, a numerical index for bounded spread automorphisms on fusion spin chains was introduced which generalizes the GNVW index \cite{MR2890305}. Combining this index with the DHR action, we obtain a homomorphism $\text{Ind}\times \text{DHR}$ from the group of dualities on an abstract fusion spin chain over a fusion category $\mathcal{C}$ to $\mathbbm{R}^{\times}\times \text{Aut}_{br}(\mathcal{Z}(\mathcal{C}))$. 

For an on-site action of a finite group $G$, letting $A^{G}$ denote the quasi-local algebra of symmetric operators, we have a braided equivalence $\text{DHR}(A^{G})\cong \mathcal{Z}(\text{Rep}(G))$. Thus the image of $\text{Ind}\times \text{DHR}$ is in $\mathbbm{R}^{\times}\times \text{Aut}_{br}(\mathcal{Z}(\text{Rep}(G)))$. The group of braided autoequivalences of the center of a group-like fusion category is well-studied for many classes of G, see for example \cite{MR2677836, MR3778972}. Recently, a complete classification of abelian $G$-dualties for on-site regular actions was obtained \cite{ma2024quantumcellularautomatasymmetric}. 
Using our results above, we obtain a similar result which also applies to non-abelian groups.


\begin{cor}\label{classificationcor}
For any finite group $G$, a pair of G-dualities in the regular representation differ by a locally symmetric finite-depth circuit if and only if they have the same numerical index and induce the same braided monoidal autoequivalence (up to natural isomorphism) on $\mathcal{Z}(\text{Rep}(G))$.
\end{cor}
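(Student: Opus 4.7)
My plan is to prove the two implications separately, with the necessity direction being nearly formal and essentially all the work concentrated in sufficiency.

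For necessity, suppose $\alpha$ and $\beta$ differ by a locally symmetric finite-depth circuit $U$. Each gate of $U$ is a symmetric local unitary, so $U$ acts by inner $*$-automorphisms on sufficiently large bounded regions of $A^{G}$ and consequently induces the identity on the DHR category; since $\text{DHR}$ is a functor of bounded-spread isomorphisms, this yields $\text{DHR}(\alpha) \cong \text{DHR}(\beta)$ as braided monoidal autoequivalences of $\mathcal{Z}(\Rep(G))$. Finite-depth circuits also have trivial GNVW-type index, so the JL24 numerical indices of $\alpha$ and $\beta$ agree as well.

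For sufficiency, set $\gamma := \beta^{-1}\alpha$, reducing to the case of a duality with $\Ind(\gamma) = 1$ and $\text{DHR}(\gamma) = \id$. In the on-site regular representation the spatial realization $A(\Rep(G), \mathbb{C}[G])_{\Vect}$ is the usual tensor-product quasi-local algebra $A = \bigotimes_{\mathbb{Z}} \End(\mathbb{C}[G])$, and the Lagrangian algebra $Z(\Vect) \in \mathcal{Z}(\Rep(G))$ associated to the fiber-functor module category is the electric Lagrangian. Since $\text{DHR}(\gamma)$ is the identity it sends $Z(\Vect)$ to itself, so Theorem~\ref{mainthm} produces a spatial extension $\hat{\gamma}: A \to A$ of $\gamma$, and the set of such extensions forms a torsor over $\Inv(\Vect(G)) \cong G$.

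I would then pick a representative in the $G$-torsor so that $\hat{\gamma}$ commutes with the on-site $G$-action, using the fact that $\rho$ and $\hat{\gamma}^{-1}\rho\hat{\gamma}$ give $G$-actions on $A$ with the same fixed-point algebra and that their discrepancy lies in the torsor. Because $\hat{\gamma}$ extends $\gamma$, its ordinary GNVW index equals the JL24 numerical index of $\gamma$ and is therefore trivial, so the Gross--Nesme--Vogts--Werner classification presents $\hat{\gamma}$ as a finite-depth unitary circuit on $A$. Finally I invoke Theorem~\ref{thm:sQCA}: the $H^{2}(G,U(1))$ sQCA obstruction to choosing this circuit with $G$-symmetric gates is read off from the action of $\text{DHR}(\gamma)$ on the invertible objects of $\mathcal{Z}(\Rep(G))$ and vanishes by hypothesis, so $\hat{\gamma}$ is a locally symmetric finite-depth circuit whose restriction to $A^{G}$ is $\gamma$.

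The main obstacle I anticipate is the last step: matching the $H^{2}$ sQCA invariant of the symmetric extension with a piece of the DHR data, and confirming that a trivial induced braided autoequivalence forces this invariant to vanish. A secondary technical point is verifying that the JL24 numerical index of $\gamma$ coincides with the GNVW index of any of its spatial extensions, which I expect to follow from the construction of the JL24 index in the group setting as an inductive-limit recovery of the GNVW ratio.
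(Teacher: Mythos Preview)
Your outline matches the paper's proof almost step for step: reduce to $\gamma=\beta^{-1}\alpha$ with $\Ind(\gamma)=1$ and $\text{DHR}(\gamma)=\id$, use Theorem~\ref{mainthm} to lift $\gamma$ to a QCA $\widetilde{\gamma}$ on $A$, choose the lift in the $G$-torsor so that it is a symmetric QCA, transfer the index to conclude $\widetilde{\gamma}$ is a finite-depth circuit, and then argue it can be made locally symmetric. Your ``secondary technical point'' about equality of indices is exactly Proposition~\ref{indexlemma} in the paper, proved there via multiplicativity of the Jones index for the towers $A^{G}_{y}\subset A^{G}_{x}\subset A_{x}$ and $A^{G}_{y}\subset A_{y}\subset A_{x}$; your expectation that this works is correct.

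There is, however, one genuine gap in your last step. Theorem~\ref{thm:sQCA} only supplies a \emph{homomorphism} from the group of symmetric QCA to $H^{2}(G,U(1))$; it does not assert that a globally symmetric finite-depth circuit with trivial $H^{2}$ image is equivalent to a \emph{locally} symmetric one. That implication is an independent result which the paper imports from outside its own framework (the abelian case from \cite{ma2024quantumcellularautomatasymmetric}, and the general case from \cite{long2024topologicalphasesmanybodylocalized}). So your sentence ``\ldots vanishes by hypothesis, so $\hat{\gamma}$ is a locally symmetric finite-depth circuit'' hides a nontrivial theorem that you have not supplied. Also, a minor correction: the $H^{2}$ value is not ``read off from the action of $\text{DHR}(\gamma)$ on the invertible objects''; rather it arises from the pair $(\text{DHR}(\gamma),\gamma_{L})$ viewed in $\text{Aut}_{br}(\mathcal{Z}(\text{Rep}(G))\,|\,L)\cong \text{Aut}_{\otimes}(\text{Hilb}_{f.d.}(G))\cong H^{2}(G,U(1))\rtimes \text{Aut}(G)$, and it is the choice $\gamma_{L}=\id_{L}$ (available precisely because $\text{DHR}(\gamma)=\id$) that makes this pair trivial.
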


The above corollary completely classifies the equivalence of $G$-dualities, including which dualities are trivial. However, it falls short of providing a closed form description of the image of $\text{Ind}\times \text{DHR}$ in $\mathbbm{R}^{\times}\times \text{Aut}_{br}(\mathcal{Z}(\text{Rep}(G)))$. We leave this question to future work. 

\bigskip

Our results point to a number of directions for future work. 
First, what is the full classification of bounded-spread isomorphisms on local algebras for fusion spin chains? Using our results, we expect to be able to reduce this problem to the classification of edge-restricted QCAs in one spatial dimension, which is an open problem. 
Second, can our results be applied to characterize symmetries of topological orders in two dimensions via their action by dualities on the holographic boundary (for example, in the sense of \cite{2307.12552})?
Third, can techniques from the theory of operator algebras, subfactors, and fusion categories be applied to classify dualities and QCAs in higher spatial dimensions? 

\bigskip

The remainder of this work is presented as follows. 
In Section~\ref{sec:2} we introduce background material on categorical dualities of quantum spin chains.
In Section~\ref{sec:fusionspinchians} we describe dualities in the context of fusion spin chains.
In Section~\ref{sec:4} we describe categorical inclusions defined by module categories over algebra objects.
In Section~\ref{sec:5} we describe how DHR bimodules can be applied to solve the extension problem for dualities of quantum spin chains.
In Section~\ref{sec:6} we present examples that generalize the KW duality, and give our applications.

\section{Dualities for spin chains with categorical symmetry}
\label{sec:2}

In this section we introduce relevant background material. 
First, we briefly recall the algebraic approach to spin chains \cite{MR1441540}. Consider $\mathbbm{Z}\subseteq \mathbbm{R}$ as a metric space. A spin chain is specified by choosing an on-site Hilbert space $\mathbbm{C}^{d}$; for every interval $I \subseteq \mathbb Z$, we have the local Hilbert space $\otimes_{x\in I} \, \mathbbm{C}^{d}$ and the algebra of operators localized in $I$:
$$A_{I}:=\otimes_{x\in I} \, M_{d}(\mathbbm{C}).$$ 

For $I\subseteq J$, we have natural inclusions $A_{I}\hookrightarrow A_{J}$ defined by $a\mapsto 1^{J<I}\otimes a\otimes 1^{J>I}$. The C*-completion of the union over all intervals (or more formally, the colimit in the category of *-algebras) is called the \textit{quasi-local algebra}. We call the \textit{algebraic union} the local algebra. States of the system in the thermodynamic limit are normalized, positive linear functionals on $A$, while Hamiltonians can be expressed as (unbounded) derivations on $A$ (see \cite{MR887100,MR1441540}).

A \textit{local Hamiltonian} is an assignment of a self-adjoint element $H_{I}\in A_{I}$ to every interval $I \subseteq \mathbb Z$ for which there exists an $R\ge 0$ such that if $I$ has length greater than $R$, then $H_{I}=0$. This defines an (unbounded) derivation on the local algebra:
$$\delta_{H}(a):=\sum_{I} [H_{I},a],$$ 

\noindent which is well-defined since the above summation contains only finitely many non-zero terms. 

The primary motivation for this paper is to understand a class of \textit{categorical dualities} between spin systems. Broadly speaking, a duality between two physical theories is a mapping between the observables of the theories that intertwines the relevant physics (state, dynamics, etc.). However, interesting dualities should be equivalences that are in some salient sense ``not obvious". In the case of spin systems dualities are generally considered in the context of symmetries, whereby we have a locality-preserving equivalence on the symmetric subsystems that does not extend to an equivalence of the whole theory (for example, see~\cite{doi:10.1080/00018732.2011.619814} or for the categorical context ~\cite{PRXQuantum.4.020357,Aasen_2016})
We give a mathematical formulation of this in 1D in terms of fusion spin chains.

First we set up some basic terminology. We need a natural notion of a locality-preserving equivalence of spin systems, which is captured by the idea of bounded-spread isomorphisms (or quantum cellular automata).

\begin{defn} If $A, B$ are the local algebras of spin chains on $\mathbbm{Z}$, a \textit{bounded-spread isomorphism} is a $*$-isomorphism $\alpha:A\rightarrow B$ such that there exists an $R>0$ so that $\alpha(A_{I})\subseteq B_{I^{+R}}$ for every interval $I$. A bounded-spread isomorphism $\alpha:A\rightarrow A$ is called a \textit{quantum cellular automaton} (QCA).
\end{defn}

In the above definition $I^{+R}$ denotes the $R$-neighborhood of the interval $I$ in $\mathbbm{Z}$. A bounded-spread isomorphism is a natural mapping between the observables of the two spin chains that \textit{uniformly} preserves locality.

\begin{defn}\label{derivation}
If $H,K$ are local Hamiltonians on spin chains with local algebras $A$ and $B$ respectively, an equivalence between them is a bounded-spread isomorphism $\alpha: A\rightarrow B$ such that $\alpha(\delta_{H}(\alpha^{-1}(b)))=\delta_{K}(b)$ for all local operators $b\in B$. We denote this by $\alpha(H)=K$.
\end{defn}

By the definition of bounded-spread isomorphism, for any local Hamiltonian $H=\{H_{I}\}$ on $A$, $\alpha(H):=\{\alpha(H)_{I}:=\alpha(H_{I^{-R}})\}$ is a local Hamiltonian on $B$ and $\alpha$ implements an equivalence between these. Thus given any bounded-spread isomorphism $\alpha: A\rightarrow B$ and any local Hamiltonian $H$ on $A$, $\alpha$ always implements an equivalence between $H$ and \textit{some} local Hamiltonian on $B$. This suggests that it could be useful to shift our focus to the groupoid of local algebras of spin chains and bounded-spread isomorphisms between them.

\medskip

\textbf{Example: Finite depth quantum circuits}. An obvious class of QCA are the \textit{finite-depth quantum circuits}, (abbreviated FDQC). Recall a depth-one circuit is given by specifying a uniformly bounded partition $\mathcal{P}$ of $\mathbbm{Z}$ into intervals, and for $I\in \mathcal{P}$, choosing a unitary $u_{I}\in A_{I}$. Then define a QCA acting on the local operator $x$ by

$$\alpha(x):=\left(\prod_{I\in \mathcal{P}} u_{I} \right) x \left(\prod_{I\in \mathcal{P}}u^{*}_{I}\right).$$

\noindent Note that by locality, this infinite product is in fact finite since all but finite many of the $u_{I}$ commute with $x$.

A \textit{finite-depth quantum circuit} (FDQC) is a finite composition of depth one circuits. FDQC are QCA that can be locally implemented. It is easy to show that FDQC is a normal subgroup of QCA. FDQC equivalence for states is used to characterize long-range entanglement classes~\cite{PhysRevB.82.155138}, and thus we can view QCA/FDQC as long-range entanglement classes of uniformly locality-preserving equivalences~\cite{https://doi.org/10.48550/arxiv.2205.09141}. Since FDQC are locally implemented, they are often considered ``trivial" as equivalences of Hamiltonians, which naturally leads us to consider the group $QCA/FDQC$, which has been frequently studied in the literature.

\medskip

Before moving on to giving a mathematical definition for our Kramers-Wannier type dualities, it is useful to have in hand the notion of an \textit{abstract spin chain.} This is strongly motivated by the Haag-Kastler axioms for an algebraic quantum field theory \cite{MR165864}, and is a straightforward generalization of the properties satisfied by the quasi-local algebra of a spin system \cite{MR1441540}.

\begin{defn} An abstract spin chain is a unital C*-algebra $A$ and for every interval $I$, a unital subalgebra $A_{I}\subseteq A$ such that 

\begin{enumerate}
\item 
(Isotony) If $I\subseteq J$, $A^{\circ}_I\subseteq A^{\circ}_{J}$
\item 
(Locality) If $I\cap J=\varnothing$, then $[A^{\circ}_{I},A^{\circ}_{J}]=0$
\item 
(Density) $(\cup A_{I})^{\| \cdot\|}=A$
\item 
(Weak algebraic Haag duality) There is some $R\ge 0$ such that for every interval $I$, $\{x\in A\ : [x,y]=0\ \text{ for all} y\in A_{J}, J\subseteq I^{c}\}\subseteq A_{I^{+R}}$.
\end{enumerate}

\end{defn}

If the last condition is satisfied with $R=0$, it is called \textit{algebraic Haag duality}. 
The abstract spin chains of interest in this paper always satisfy this stronger version of the last axiom. For two abstract spin chains, the definition of \textit{bounded-spread isomorphism} used above makes perfect sense unchanged. We note that if $\alpha$ is bounded spread, then it follows from weak algebraic Haag duality that $\alpha^{-1}$ is as well.

\bigskip

\noindent \textbf{Actions of fusion categories on spin chains.} We are now ready to describe dualities, in the same sense as~\cite{PRXQuantum.4.020357}. Suppose we have two spin systems with local algebras $A$ and $B$, both with a global symmetry $\mathcal{C}\hookrightarrow A$ and $\mathcal{D}\hookrightarrow B$ (think of an on-site group symmetry, or more generally, a categorical symmetry determined by an MPO, see below). Suppose further that the local terms of the Hamiltonians $H$ and $K$ are both symmetric. Then we can restrict our attention to the symmetric operators, which consist of subalgebras $A^{\mathcal{C}}\subseteq A$ and $B^{\mathcal{D}}\subseteq B$ containing the terms of $H$ and $K$ respectively. We can then ask for bounded-spread isomorphisms \textit{between the subalgebras} $A^{\mathcal{C}}$ and $B^{\mathcal{D}}$ which intertwine $H$ and $K$. This is what we call a duality.

First, we formalize what we mean by a fusion categorical symmetry and the algebras of symmetric operators, resulting in the definition of fusion spin chains and duality. Recall a unitary fusion category is a unitary tensor category (in the sense of~\cite{MR2183279}) with finitely many isomorphism classes of simple objects. For further definitions concerning fusion categories, modules categories, and related concepts, see~\cite{MR3242743}.

\begin{defn} Let $A$ be a spin system with local Hilbert space $\mathbbm{C}^{d}$ and let $\mathcal{C}$ be a unitary fusion category. An action of $\mathcal{C}$ on $A$ consists of 

\begin{enumerate}
\item 
An indecomposable, semi-simple right module category $\mathcal{M}$ of $\mathcal{C}$.
\item 
An object $X\in \mathcal{C}^{*}_{\mathcal{M}}$ which is \textit{strongly tensor generating} for $\mathcal{C}^{*}_{m}$
\item
A unitary isomorphism $\bigoplus_{i,j\in \text{Irr}(\mathcal{M})} \mathcal{M}(i, X\triangleright j)\cong \mathbbm{C}^{d}$.
\end{enumerate}

The action is called \textit{unital} if $\mathcal{M}$ is equivalent to $\text{Hilb}_{f.d}$ as C*-categories.

\end{defn}

From this data, we can construct \textit{matrix product operators} which act by quantum channels on (a sector of) the spin system, implementing the fusion rules of $\mathcal{C}$.

In general a matrix product operator is defined from a 4-valent tensor 
\begin{align}
M^{ij}_{\alpha \beta}:=
\vcenter{\hbox{\includegraphics[page=1]{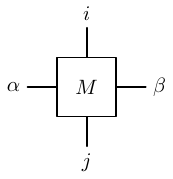}}}
\end{align}
by
\begin{align}
    \sum_{\{i\}\{j\}}  \tr( M^{i_1j_1} M^{i_2j_2}... \noindent M^{i_nj_n} ) \ket{\{i\}}\bra{\{j\}}
\end{align}
where $\{i\}=\{i_1,i_2,\dots,i_n\}$ and similarly for $\{j\}$, and 
\begin{align}
    \tr( M^{i_1j_1} M^{i_2j_2}... \noindent M^{i_nj_n} ) = 
    \vcenter{\hbox{\includegraphics[page=2]{Figures.pdf}}}
\end{align}
where we treat $M^{ij}_{\alpha \beta}$ as a matrix element of $M^{i j}$ and take a matrix product over the $\alpha, \beta$ indices of the tensors $M^{ij}_{\alpha \beta}$. From the data in Definition 2.4, one can construct for each (simple) object in $\mathcal{C}$ a matrix product operator, satisfying a family of categorical relations \cite{Sahinoglu2014,MR3614057}. 
It is in this sense that a fusion category is actually acting on the spin chain. Alternatively, one can build from an MPO a locality preserving quantum channel, which is defined on local operators via a suitable normalization of the following picture
\begin{align}
    \vcenter{\hbox{\includegraphics[page=3]{Figures.pdf}}}
\end{align}
where $M^*$ denotes the complex conjugate of $M$ with the $i$ and $j$ indices swapped. 
The categorical relations imply that these quantum channels implement a normalized version of the fusion rules of the category.

\bigskip

\noindent \textbf{Edge-restricted algebra}. For non-unital fusion categorical symmetries, it is natural to consider a non-unital ``subalgebra" of $A$, called the \textit{edge-restricted} algebra. If we pick an orthonormal basis $E_{i,j}$ for the space $\mathcal{M}(i, X\triangleleft j)$, we can define a finite graph $\mathcal{G}$ whose vertices are $\text{Irr}(\mathcal{M})$, and the edges from $a\rightarrow b$ are defined to be $E_{a,b}$. Then using the isomorphism between the onsite Hilbert space $H_{i}$ and $\bigoplus_{i,j\in \text{Irr}(\mathcal{M})} \mathcal{M}(i, X\triangleleft j)\cong \mathbbm{C}^{d}$, we have a distinguished basis of $H_{m}$ indexed by the edges of $\mathcal{G}$, for every $m\in \mathbbm{Z}$

Then for each pair of adjacent sites $(i, i+1)$, consider the local projection $P_{i}\in A_{[i,i+1]}$ defined by $P_{i}(e\otimes f)=\delta_{t(e)=s(f)}e\otimes f$, which checks the adjacency of the edges $e$ and $f$ in the graph $\mathcal{G}$. Note that $[P_{i},P_{j}]=0$ for all $i$ and $j$, and thus we can (unambiguously) consider the projection 

$$P:=\prod_{i\in \mathbbm{Z}} P_{i}.$$

\noindent We have two immediate observations:

\begin{enumerate}
\item 
This projection in general does not lie in the quasi-local C*-algebra $A$. However, if we pick a state $\phi$ on $A$ such that $\phi(P_{i}\cdot P_{i})=\phi$ for all $i$, then $P$ does define an operator on the GNS Hilbert space $L^{2}(A,\phi)$ in the von Neumann completion of $A$. If we wish to avoid picking a state, we can view $P$ as an element of the von Neumann algebra $A^{**}$, which is the von Neumann completion of $A$ in the universal Hilbert space representation of $A$ \cite{MR548728}.

\item

If we consider a finite chain on an interval $I$, then the action of the restricted version of $P$, $P_{I}$ on $H_{I}$ agrees with the action of the MPO assigned to the unit object $\mathbbm{1}\in \mathcal{C}$ away from the boundary. For this reason, it is reasonable to call the image of $P$ in a Hilbert space representation the \textit{unit MPO sector}. This is the sector on which the categorical MPO operators satisfy the required categorical relations \cite{MR3614057}.

\end{enumerate}

We can consider the sector ``corner subalgebra" of A, defined by $A^{\circ}:=PAP$ (in $A^{**}$), with local subalgebras $A^{\circ}_{I}:=PA_{I}P$. Technically speaking, the the elements $A^{\circ}_{I}$ are not properly elements of the algebra $A$ or even its C*-completion, but rather they are elements of the von Neumann completion of $A$ (either in the double dual von Neumann algebra $A^{**}$, or in any faithful Hilbert space representation of $A$ where $P$ does not act by $0$).

We can easily identify $A^{\circ}_{I}$ as follows: If $I=[i,j]$ with $j-i=n$, let $L^{i,j}_{I}$ be the Hilbert space with orthonormal basis given by the paths of length $n$ in $\mathcal{G}$ starting at $i$ and ending at $j$. Let $d^{n}_{i,j}$ denote the number of length $n$ paths from $i$ to $j$ so that $d^{i,j}_{n}=\text{dim}(L^{i,j}_{I})$. 

Then we have an isomorphism of Hilbert spaces to 

$$\left(\prod_{a\le i<b}P_{i}\right)(H_{I}
)\cong \bigoplus_{i,j\in \mathcal{G}}L^{i,j}_{I}.$$

It is easy to see from the definitions that 

$$A^{\circ}_{I}\cong \bigoplus_{i,j\in \text{Irr}(\mathcal{M})} \text{Mat}(L^{i,j}_{I})$$

\noindent where the summand $\text{Mat}_{d^{i,j}_{n}}(\mathbbm{C})$ is identified with all the linear operators on the Hilbert space $L^{i,j}_{I}$. In other words, $A^{\circ}_{I}$ is isomorphic to the subalgebra of operators on $L_{I}$ that satisfy the property that the coefficient of the matrix unit $e_{p,q}$ is $0$ if $s(p)\ne s(q)$ or $t(p)\ne t(q)$. 

Note that from this description, $A^{\circ}_{I}$ is isomorphic to a (non-unital) subalgebra of $A_{I}$, but not coherently. Nevertheless, $A^{\circ}$ has the structure of an abstract spin chain. All of the axioms are obvious expect for weak algebraic Haag duality, which in fact follows from Lemma \ref{Haagdualitylem} below.

\bigskip

\noindent\textbf{Symmetric algebra}. The operators of interest to us are the symmetric local operators. To define these, for each $Y\in \text{Irr}(\mathcal{C})$, and interval $J$, let $M^{Y}_{J}$ denote the associated matrix product operator, acting on (a sufficiently large) edge-restricted Hilbert space with periodic boundary conditions $$L^{per}_{J}:=\bigoplus_{i,j\in \text{Irr}(\mathcal{M})} L^{i,i}_{J}$$

For an interval $I<<J$, we have a unital, faithful representation $A^{\circ}_{I}\subseteq \text{Mat}(L^{per}_{J})$. Then we define

$$A^{\mathcal{C}}_{I}:=\{x\in A^{\circ}_{I}\ :\ [M^{Y}_{J},x]=0\}.$$

Notice that since the MPO are defined locally, this doesn't depend on the choice of ambient interval for $J>>I$. We also note that we need our graph $\mathcal{G}$ to be sufficiently well-connected, which is guaranteed by the indecomposability and strongly tensor generating assumptions for $X\in \mathcal{C}^{*}_{\mathcal{M}}$.

For $I\subseteq J$, we have natural inclusions $A^{\mathcal{C}}_{I}\hookrightarrow A^{\mathcal{C}}_{J}$, and taking the colimit in the category of C*-algebras, we obtain the algebra $A^{\mathcal{C}}$. Identifying $A^{\mathcal{C}}_{I}$ with its image in the colimit gives an abstract spin system. We call the algebra $A^{\mathcal{C}}$ the \textit{symmetric algebra}.

\bigskip

\bigskip

We can now formally define a version of duality for Hamiltonians generalizing Kramers-Wannier duality to the categorical setting, following~\cite{PRXQuantum.4.020357}. We can then state the problem that motivates this work.

\begin{defn}
    Let $H=\{H_{I}\}$ be a local Hamilton on the spin system $A$, and let $\mathcal{C}$ be a fusion category acting on $A$. Then we say $H$ is $\mathcal{C}$-symmetric if for every interval $I$ with $H_{I}\ne 0$, $H_{I}P=PH_{I}\ne 0$ and $H_{I}P\in A^{\mathcal{C}_{I}}$.
\end{defn}

In this case the collection $H^{\circ}:=\{H_{I}P\}$ actually defines a local Hamiltonian internally to the symmetric chain $A^{\mathcal{C}}$, which we denote $H^{\mathcal{C}}:=\{H^{\mathcal{C}}_{I}\}$

\bigskip

\begin{defn}
Let $A$ and $B$ be spin chains with local Hamiltonians $H$ and $K$ respectively. A \textit{categorical duality} between $(A,K)$ and $(B,K)$ consists of:
\begin{enumerate}
\item 
Categorical symmetries $\mathcal{C}\curvearrowright A$, $\mathcal{D}\curvearrowright B$ such that $H$ is $\mathcal{C}$-symmetric and $K$ is $\mathcal{D}$-symmetric
\item 
A bounded-spread isomorphism $\alpha: A^{\mathcal{C}}\rightarrow B^{\mathcal{D}}$ such that $\alpha(H^{\mathcal{C}})=K^{\mathcal{D}}$ (in the sense of Definition \ref{derivation}, but restricted to $A^{\mathcal{C}}$).
\end{enumerate}
\end{defn}

\bigskip

\noindent \textbf{Motivating Example: Kramers-Wannier}. Here we spell out the motivating example of Kramers-Wannier duality \cite{Kramers1941}. In this case, the acting fusion category is $\mathcal{C}=\text{Hilb}_{f.d.}(\mathbbm{Z}_{2})$. The module category is $\mathcal{M}=\text{Hilb}_{f.d}$ equipped with the trivial $\text{Hilb}_{f.d.}(\mathbbm{Z}_{2})$ action, and the dual object $X=\mathbbm{C}[\mathbbm{Z}_{2}]\in \text{Rep}(\mathbbm{Z}_{2})$ is the left regular representation.

This translates into the standard on-site spin flip action of $\mathbbm{Z}_{2}$ a chain of qubits $\otimes_{\mathbbm{Z}} \mathbbm{C}^{2}$, so that $A=\otimes_{\mathbbm{Z}} M_{2}(\mathbbm{C})$. The non-trivial symmetry operator (or MPO) can be represented by the product of on-site operators (i.e. with trivial valence bonds) as

$$U_{g}:=\prod \sigma^{x}_{i}.$$

In this case, since $\mathcal{M}$ is rank one, the action is unital, so that the edge-restricted algebra and the quasi-local algebra of the entire spin system coincide, $A^{\circ}=A$. 

The symmetric algebra is the C*-algebra generated by the operators $\sigma^{x}_{i}$ and $\sigma^{z}_{i}\sigma^{z}_{i+1}$ as $i$ ranges over $\mathbbm{Z}$, or more formally

$$A^{\mathbbm{Z}_{2}}=\text{C}^{*}-\text{alg}\{\sigma^{x}_{i}, \sigma^{z}_{i}\sigma^{z}_{i+1}\ :\ i\in \mathbbm{Z}\}.$$

Then consider the Ising Hamiltonian

$$H_{J,h}=\sum_{i\in \mathbbm{Z}} J\sigma^{z}_{i}\sigma^{z}_{i+1}+ h\sigma^{x}_{i},$$

which is $\mathbbm{Z}_{2}$-symmetric according to our definition, since all local terms lie in the subalgebra $A^{\mathbbm{Z}_{2}}$.

Now, the Kramers-Wannier duality is defined by

$$\alpha(\sigma^{x}_{i})=\sigma^{z}_{i-1}\sigma^{z}_{i}\ \ \text{and}\ \  \alpha(\sigma^{z}_{i}\sigma^{z}_{i+1})=\sigma^{x}_{i},$$

\medskip

\noindent which extends to an isomorphism $\alpha: A^{\mathbbm{Z}_{2}}\rightarrow A^{\mathbbm{Z}_{2}}.$ We see immediately from the definition

$$\alpha(H_{J,h})=H_{h,J}.$$

A key feature of KW-duality is that the equivalence between theories \textit{can only be witnessed} at the level of symmetric operators, and does not extend to a (uniformly locality-preserving) equivalence of the whole spin system~\cite{doi:10.1080/00018732.2011.619814}. Alternatively, some authors say that any unitary extension does not preserve locality, though this is not rigorously defined in the algebraic formalism for the thermodynamic limit we are using here.

This leads us to the following definition.

\begin{defn}
A bounded-spread isomorphism $\alpha: A^{\mathcal{C}}\rightarrow B^{\mathcal{D}}$ is called \textit{spatially implemented} if it admits an extension to a bounded-spread isomorphism $\widetilde{\alpha}: A^{\circ}\rightarrow B^{\circ}$.

\end{defn}

The idea is that spatially implemented bounded-spread isomorphisms are not proper dualities in the sense of KW, and are actually more of a standard equivalence. This is not literally true on the nose in full generality, since the $\widetilde{\alpha}:A^{\circ}\rightarrow B^{\circ}$ generally does not extend to $A\rightarrow B$, and thus more properly we would say 
that $\widetilde{\alpha}:A^{\circ}\rightarrow B^{\circ}$ such that $\widetilde{\alpha}(H^{\mathcal{C}})=K^{\mathcal{D}}$ is an \textit{emergent equivalence}~\cite{doi:10.1080/00018732.2011.619814}. Some may view an emergent equivalence as a kind of duality, but for us we regard it as closer to an equivalence than a KW-type duality. Indeed, if $A^{\circ}=A$ and $B^{\circ}=B$ (or in other words, the categorical action is unital, or equivalently, the module categories correspond to fiber functors), then the emergent equivalence reduces to an ordinary equivalence.

In any case, we can now give a precise mathematical statement of Question \ref{Original question}:

\bigskip

\noindent \textbf{Reformulation of Question \ref{Original question}}.\label{dualityquest} Given a bounded-spread isomorphism $\alpha: A^{\mathcal{C}}\rightarrow B^{\mathcal{D}}$, can we characterize when it is spatially implemented?

\bigskip

In the sections that follow, we present a precise and abstract statement of this problem, and a solution using the DHR bimodule category of fusion spin chains.

\section{Abstract fusion spin chains}\label{sec:fusionspinchians} In the previous section, we set up the problem of studying bounded-spread isomorphisms between algebras of operators invariant under a categorical symmetry (i.e. $A^{\mathcal{C}}$), and asked when this can be extended to the edge-restricted algebra (i.e. $A^{\circ}$). In this section, we present an abstract formulation of this problem in terms of fusion spin chains. We refer the reader to \cite{MR3242743} for definitions of fusion categories and related objects. For convenience, we assume all our fusion categories are strict.

\bigskip

\textbf{Fusion spin chains}. Given an indecomposible multi-fusion category $\mathcal{E}$ and an object $X\in \mathcal{E}$ (not necessarily simple) we construct an abstract spin chain $A(\mathcal{E},X)$ as follows:

\bigskip

\begin{itemize}
\item 
For each interval $I$, define $ A(\mathcal{C},X)_{I}:=\mathcal{E}(X^{\otimes |I|}, X^{\otimes |I|})$

\bigskip

\item 
If $I\subseteq J$, then we define the inclusion $A(\mathcal{E}, X)_{I}\hookrightarrow A(\mathcal{E}, X)_{J}$ by 

$$f\mapsto (1_{X})^{\otimes\ J<I}\otimes f \otimes (1_{X})^{\otimes\ J>I}$$

\item 
We set $A(\mathcal{E},X)=\text{colim}_{I}A(\mathcal{E}, X)_{I} $, where the colimit is taken in the category of $*$-algebras. 
\end{itemize}

\bigskip 

\noindent The commutativity property follows from functoriality of the monoidal product. Abstract spin chains constructed as above are called \textit{fusion spin chains} \cite{2304.00068,JL24}.

\bigskip

It turns out the three examples of abstract spin chains we saw in the previous section can be realized as fusion spin chains:

\bigskip

\begin{enumerate} 
\item \textbf{Concrete spin systems}. Choose the fusion category $\mathcal{E}=\text{Hilb}_{f.d.}$ of finite-dimensional Hilbert spaces, and pick the object $X:=\mathbbm{C}^{d}$. Then $A(\text{Hilb}_{f.d.}, X)$ can be identified with the ordinary spin chain $A$ constructed from $\mathbbm{C}^{d}$.

\bigskip

\item \textbf{Edge-restricted algebras}. Consider a fusion category action on an ordinary spin system $\mathcal{C}\curvearrowright A$, and let $X\in \mathcal{C}^{*}_{\mathcal{M}}$. This defines an object in the multi-fusion category
$\widetilde{X}\in \mathcal{E}=\text{Mat}_{m}(\text{Hilb}_{f.d.})$, where $m$ is the rank of $\mathcal{M}$. This is the multi-fusion category consisting of $m\times m$ matrices of finite-dimensional Hilbert spaces, with $\otimes$ given by ``matrix-multiplication" \cite{ferrer2020classifying}.
The object $\widetilde{X}$ is defined by 
$$\widetilde{X}_{i,j}:=\mathcal{M}(i, X\triangleleft j)$$

\noindent with Hilbert space structure on the components given by the composition inner product. It is easy to see that the edge-restricted abstract spin chain $A^{\circ}$ associated with our action $\mathcal{E}\curvearrowright A$ can be identified with the fusion spin chain $A(\text{Mat}_{m}(\text{Hilb}_{f.d.}), \widetilde{X})$.

\bigskip

\item 
\textbf{Symmetric algebras} Again, suppose we have a fusion category $\mathcal{C}\curvearrowright A$. Then we can identify $A^{\mathcal{C}}$ with the fusion spin chain $A(\mathcal{C}^{*}_{\mathcal{M}}, X)$ \cite{KawAnn, MR4272039}. 
\end{enumerate}

\bigskip

This leads us to define the following groupoid:

\begin{defn}
\text{Dua} is the groupoid whose
\begin{enumerate}
    \item Objects are fusion spin chains $A(\mathcal{E}, X)$.
    \item 
    Morphisms are bounded-spread isomorphisms with composition.
\end{enumerate}
\end{defn}

Our discussion above has motivated the definition of $\text{Dua}$, whose objects are \textit{abstract} spin systems of symmetric operators. However, we are interested in studying spin systems \textit{together with an embedding into an edge-restricted algebra} for some categorical symmetry. Here we give an abstract formulation.

\begin{defn}
Let $\mathcal{E}$ be an indecomposible multi-fusion category. A \textit{quotient} of $\mathcal{E}$ is an indecomposible multifusion category $\cD$ and a dominant tensor functor $F:\mathcal{E}\rightarrow \cD$
\end{defn}

We typically consider quotients up to (unitary) monoidal equivalence. Now, Given a dominant tensor functor $F:\mathcal{E}\rightarrow \mathcal{D}$, we have a natural inclusion of nets, 

$$A(\mathcal{E},X)\hookrightarrow_{\iota_{F}} A(\mathcal{D},F(X)) $$

\medskip

where $f\in A(\mathcal{E},X)_{I}=\mathcal{E}(X^{\otimes n},X^{\otimes n})$ is sent to the composition

$$
\begin{tikzcd}
F(X)^{\otimes n}\arrow[r, "can"] & F(X^{\otimes n}) \arrow[r, "F(f)"] & F(X^{\otimes n})\arrow[r, "can"] & F(X)^{\otimes n}\\
\end{tikzcd}
$$

in $A(\mathcal{D},F(X))_{I}$. Here $\text{can}$ denotes the canonical unitary isomorphisms built from the tensorators on $F$. $\iota_{F}$ is a unital inclusion of the local C*-algebras for each $I$, and is compatible with the connecting inclusions associated with $I\subseteq J$. Thus $\iota_{F}$ extends to a unital inclusion of quasi-local algebras. If we have a quotient  $F:\mathcal{C}\rightarrow \mathcal{D}$ and a quotient $G:\mathcal{D}\rightarrow \mathcal{E}$, then the composition $G\circ F:\mathcal{C}\rightarrow \mathcal{E}$ is a quotient. Furthermore, it is easy to see by construction that $\iota_{G\circ F}:= \iota_{G}\circ \iota_{F}$. We call such an inclusion of nets \textit{a categorical inclusion}.

Given a finitely semi-simple category $\mathcal{M}$, it's category of endofunctor $\text{End}(\mathcal{M})$ is an indecomposable multi-fusion category equivalent to $\text{Mat}_{n}(\text{Hilb}_{f.d.})$, where $n=\text{rank}(\mathcal{M})$. Recall that the data of a (left) $\mathcal{E}$-module category structure on a  finitely semi-simple category $\mathcal{M}$ can be expressed as a tensor functor $F:\mathcal{E}\rightarrow \text{End}(\mathcal{M})$, where $F(Y):=Y\triangleright \cdot$

\begin{defn}\label{spatialrealization} A \textit{spatial realization} of a fusion categorical net $A(\mathcal{E},X)$ is a categorical inclusion $A(\mathcal{E}, X)\hookrightarrow A(\text{End}(\mathcal{M}), X\triangleright \cdot)$ associated to an indecomposable left $\mathcal{E}$-module category $\mathcal{M}$. We denote $A(\text{End}(\mathcal{M}), X\triangleright \cdot)$ by $A(\mathcal{E}, X)_{\mathcal{M}}$.
\end{defn}

When $\text{rank}(\mathcal{M})=1$, $\text{End}(\mathcal{M})\cong \text{Hilb}_{f.d.}$, hence a spatial realization is an honest, unital inclusion of a fusion spin chain into the local operators of a concrete spin chain.

The motivation for this definition arises from our concept of fusion categorical symmetries. Above, we characterized these with the data of a fusion category $\mathcal{C}$, a right, indecomposable module category $\mathcal{M}$, and a (strongly tensor generating) object $X\in \mathcal{C}^{*}_{\mathcal{M}}$, correspond precisely (up to an obvious notion of equivalence) to spatial realizations of the fusion categorical net $A(\mathcal{C}^{*}_{\mathcal{M}},X)$.

If we change our perspective and denote $\mathcal{E}:=\mathcal{C}^{*}_{\mathcal{M}}$, then we can view $\mathcal{M}$ as a left indecomposable $\mathcal{E}$-module category, and thus spatial realizations of the fusion spin chain $A(\mathcal{E}, X)$, parameterized by a left module category $\mathcal{M}$, correspond precisely the algebra symmetry operators inside $A^{\circ}$ for an action of the dual category $\mathcal{C}^{*}_{\mathcal{M}}\curvearrowright A$. This leads us to reformulate Question \ref{dualityquest} as follows:

\bigskip
\noindent \textbf{Reformulation of Question \ref{Original question}}. 
    Let $A(\mathcal{E},X)$ and $ A(\mathcal{D},Y)$ be fusion spin chains, with spatial realizations parameterized by module categories $\mathcal{M}$ and $\mathcal{N}$ respectively. If $\alpha: A(\mathcal{E},X)\rightarrow A(\mathcal{D},Y)$ is a bounded-spread isomorphism, when does $\alpha$ extend to a bounded-spread isomorphism ${\widetilde{\alpha}: A(\mathcal{E},X)_{\mathcal{M}}\rightarrow A(\mathcal{D},Y)_{\mathcal{N}}}$?

\section{The algebra model for categorical inclusions}
\label{sec:4}

One way to build a categorical inclusion is to pick a connected, commutative Q-system object $L\in \mathcal{Z}(\mathcal{C})$ and consider the \emph{category $\mathcal{C}_{L}$ of right $A$-modules in $\mathcal{C}$} (technically we should first apply the forgetful functor to $L$ to obtain an object in $\mathcal{C}$). The commutative central structure on $L$ equips $\mathcal{C}_{L}$ with the structure of a unitary multi-fusion category such that the free module functor $F_{L}:\mathcal{C}\rightarrow \mathcal{C}_{L}$, $x\mapsto x\otimes L$, is a dominant tensor functor.

A theorem of~\cite{Natale2011} shows that any quotient $F:\cC\rightarrow \cD$ is of the form described above, i.e. there exists a (unique up to isomorphism) connected commutative Q-system $L\in \mathcal{Z}(\cC)$ such that the digram of monoidal functors commutes up to natural isomorphism

$$
\begin{tikzcd}
\mathcal{C}\arrow[d, swap, "F"]\arrow[dr,"F_{L}"] & \\
\mathcal{D} \arrow[r, swap,"\cong"] & \mathcal{C}_{L} \\
\end{tikzcd}
$$

This gives an ``internal" description of a quotient as opposed to the a-priori ``external" description requiring an indecomposable multi-fusion category $\mathcal{D}$. This is directly analogous to the first isomorphism theorem for groups.

Now we recall a concrete model for constructing $\mathcal{C}_{L}$ and the free module functor $F_{L}:\mathcal{C}\rightarrow \mathcal{C}_{L}$. Let $L\in \mathcal{Z}(\mathcal{C})$ be a connected, commutative $Q$-system. We denote the multiplication morphism $m:L\otimes L\rightarrow L$ and the unit map $\iota:\mathbb{1}\rightarrow L$. We slightly abuse notation and identify $L$ and its structure maps with their images in $\mathcal{C}$ under the forgetful functor.

Define the category $\mathcal{C}^{\circ}_{L}$ whose objects are objects of $\mathcal{C}$ and whose morphisms are given by 
$$\mathcal{C}^{\circ}_{L}(x,y):=\mathcal{C}(x,y\otimes A).$$
Composition of morphisms $f\in \mathcal{C}^{\circ}_{L}(y,z)$ and $g\in \mathcal{C}^{\circ}_{L}(x,y)$ is defined as 
$$f\circ g:= (1_{z}\otimes m)\circ (f\otimes 1_{L})\circ g\in \mathcal{C}^{\circ}_{L}(x,z).$$
Following \cite{MR3687214}, it is not hard to see that $\mathcal{C}^{\circ}_{L}$ becomes a C*-category with $*$-structure (denoted here with a $\dagger$ to distinguish from the $*$-structure in $\mathcal{C}$)
$$g^{\dagger}:=g^{*}\circ \left(1_{y}\otimes (m^{*}\circ \iota)\right).$$
To define the tensor structure on $\mathcal{C}^{\circ}_{L}$, the tensor product on objects (which we denote with $\boxtimes$) is just the tensor product in $\mathcal{C}$, $x\boxtimes y=x\otimes y$ while the tensor product of morphisms is defined, for $f\in \mathcal{C}^{\circ}_{L}(x,y)$ and $g\in \mathcal{C}^{\circ}_{L}(z,w)$ by
$$f\boxtimes g:=(1_{y\otimes w}\otimes m) \circ (1_{y}\otimes (\sigma_{L,w}\otimes 1_{L})\circ (f\otimes h). $$
This turns $\mathcal{C}_{L}$ into a (strict) unitary multi-tensor category, and since $L$ was connected in $\mathcal{Z}(\mathcal{C})$, it is in fact an indecomposable multi-fusion category. In fact, the functor $\mathcal{C}^{\circ}_{L}$ to the category $\text{FreeMod}_{L}$ of free (internal to $\mathcal{C}$) right $L$-modules sending $x\mapsto x\otimes L$ and $f\mapsto (1_{y}\otimes m)\circ (f\otimes 1_{L})$ is an equivalence of categories.

Thus the (unitary) Karoubian completion of $\mathcal{C}^{\circ}_{L}$ is equivalent to the category of projective right $L$-modules, which is equivalent to the category of all right $L$ modules since this category is semisimple. By a minor abuse of notation, we denote the unitary Karoubian completion of $\mathcal{C}^{\circ}_{L}$ by $\mathcal{C}_{L}$ and identify $\mathcal{C}^{\circ}_{L}$ with its image in $\mathcal{C}_{L}$.

\bigskip

\noindent In this picture, the free module functor $F_{L}:\mathcal{C}\rightarrow \mathcal{C}_{L}$ is just given by $F_{L}(x)=x\in \mathcal{C}_{L} $ and $F_{L}(f):=(1_{y}\otimes \iota)\circ f$.

Now we have a nice way to describe the net obtained from categorical quotients $F: \mathcal{C}\rightarrow \mathcal{D}$ directly in terms of $\mathcal{C}$. Let $L$ be the commutative Q-system in the center corresponding to this quotient, and $F_{L}$ the model for $F$ as above. Then for an interval $I\subseteq \mathbb{Z}$ with $|I|=n$, $$A(\mathcal{C}_{L}, X)_{I}=\mathcal{C}_{L}(X^{\otimes n}, X^{\otimes n})\cong \mathcal{C}(X^{\otimes n}, X^{\otimes n}\otimes L).$$

Furthermore, for $I\subseteq J$, the inclusion $A(\mathcal{C}_{L}, X)_{I}\hookrightarrow\fA(\mathcal{C}_{L}, X)_{J}$ is described for $f\in \mathcal{C}(X^{\otimes n}, X^{\otimes n}\otimes L)$ by
$$f\mapsto 1_{X^{\otimes l+n}}\sigma_{L, X^{\otimes r}}(1_{X^{\otimes l}}\otimes f\otimes 1_{X^{\otimes r}}).$$
\noindent Here $l$ is the number of points in $J$ strictly less than $I$, and $r$ is the number of points in $J$ strictly greater than $I$.

The inclusion $A(\mathcal{C}, X)_{I}\hookrightarrow A(\mathcal{C}_{L}, X)_{I}$ is simply given by 
$$f\mapsto f\otimes \iota,$$
\noindent for $f\in \mathcal{C}(X^{\otimes n}, X^{\otimes n})$.

This generalizes to give a clean description of the lattice of intermediate categorical inclusions. For any sub Q-system $K\le L$,  the inclusion map $j_{K}: K\rightarrow L$ gives the natural maps

$$A(\mathcal{C}_{K}, X)_{I}\cong \mathcal{C}(X^{\otimes n}, X^{\otimes n}\otimes K)\hookrightarrow \mathcal{C}(X^{\otimes n}, X^{\otimes n}\otimes L)\cong A(\mathcal{C}_{L}, X)_{I},$$

$$f\mapsto (1_{X^{\otimes n}}\otimes j_{K})\circ f.$$

\noindent This sets up an equivalence between the lattice of intermediate category inclusions between $A(\mathcal{C},X)$ and $A(\mathcal{C}_{L},X)$ and the lattice of sub Q-system objects $K\le L$. 
We typically take the largest algebra $L$ to be normalized so that $m\circ m^{*}=1_{L}$, but for intermediate subalgebras, we only require that $m\circ m^{*}=\lambda 1_{K}$ for some (necessarily positive) scalar $\lambda$.

\begin{rem}\label{Canonicalcond}\textbf{Canonical conditional expectation}. For any connected \'etale algebra $L\in \mathcal{Z}(\cC)$, identifying $A(\cC,X)\hookrightarrow A(\cC_{L},X)$, there is a canonical, locality-preserving conditional expectation $E:A(\cC_{L},X)\rightarrow A(\cC,X)$ which is defined for $f\in A(\cC_{L}, X)_{I}=\cC(X^{\otimes n}, X^{\otimes n}\otimes L)$ by
$$E(f):=\frac{1}{d_{L}} (1_{X^{\otimes n}}\otimes i^{*})\circ f.$$

\end{rem}

\begin{rem}
By the above discussion, spatial realizations of $A(\mathcal{C}, X)$ are parameterized by choices of indecomposable $\mathcal{C}$-module categories. 
Every such module category canonically defines a Lagrangian algebras $Z(\mathcal{M})\in \mathcal{Z}(\mathcal{C})$, its center \cite{MR2669355}. In particular, we can always view the spatial realizations $A(\mathcal{C},X)\hookrightarrow A(\mathcal{C},X)_{\mathcal{M}}$ as $A(\mathcal{C}, X)\hookrightarrow A(\mathcal{C}_{Z(\mathcal{M}}, X)$.
\end{rem}

In this section, we prove a lemma necessary for the proof of the main theorem, concerning relative commutants.  As mentioned above, fusion spin chains satisfy weak algebraic Haag duality, and in fact, satisfy an even stronger version simply called algebraic Haag duality. In particular, if we choose $n$ such that $X^{\otimes n}$ contains a copy of every simple object in $\cC$ (the existence of such an $n$ is the definition of strong tensor generating), then we have that for any interval $I$ with $|I|>n$, $A(\cC,X)_{I}=\{x\in \fA\ :\ [x,y]=0\ \text{for all}\ y\in A(\cC,X)_{I^{c}}\}$. The following lemma shows that the nets $\fA(\cC_{L},X)$ satisfy an even stronger ``relative" version of this property.

\begin{lem}\label{Haagdualitylem}
Let $A(\cC, X)$ be a fusion spin chain and $n$ as above. For any connected, commutative Q-system $L\in \mathcal{Z}(\cC)$, then for any interval $I$ with $|I|>n$, $$A(\cC_{L},X)_{I}=\{x\in A(\cC_{L}, X)\ :\ [x,y]=0\ \text{for all}\ y\in A(\cC,X)_{I^{c}}\}.$$
\end{lem}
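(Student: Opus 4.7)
The inclusion $A(\cC_L, X)_I \subseteq \{x \in A(\cC_L,X) : [x,y]=0 \ \forall\, y \in A(\cC,X)_{I^c}\}$ is immediate from locality; the substance is in the reverse inclusion. Fix $x$ in the right-hand side. Since $A(\cC_L,X)$ is the algebraic colimit of its local subalgebras, $x$ lies in $A(\cC_L,X)_J$ for some finite interval $J \supseteq I$, and we freely enlarge $J$ so that the lengths $l$ and $r$ of $J$ on either side of $I$ both exceed $n$. Using the concrete model of Section~\ref{sec:4}, identify $x$ with a morphism $x \colon X^{\otimes m} \to X^{\otimes m} \otimes L$ in $\cC$, where $m=l+k+r$ and $k=|I|$.

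The first step is to translate the commutator equation in the algebra $A(\cC_L,X)_J$ into an equation of morphisms in $\cC$. For $f \in \cC(X^{\otimes l}, X^{\otimes l})$ lifted along $F_L$ onto the left $l$ sites, the definition of composition in $\cC_L$ together with the unit law $m \circ (\iota \otimes 1_L) = 1_L$ cancels one of the two $L$-strands and reduces $[F_L(f \otimes 1_{X^{\otimes k+r}}), x] = 0$ to
$$((f \otimes 1_{X^{\otimes k+r}}) \otimes 1_L) \circ x = x \circ (f \otimes 1_{X^{\otimes k+r}}).$$
Commutation against $g \in \cC(X^{\otimes r}, X^{\otimes r})$ on the right of $I$ yields an analogous equation in which $L$ sits past $X^{\otimes r}$; conjugating by the half-braiding $\sigma_{L, X^{\otimes r}}$ (available precisely because $L \in \cZ(\cC)$) and using its naturality in $X^{\otimes r}$ rewrites this as the structurally identical condition on $\hat x := (1_{X^{\otimes l+k}} \otimes \sigma_{L, X^{\otimes r}}^{-1}) \circ x$.

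The second step is the categorical commutant argument. Decomposing $X^{\otimes l} = \bigoplus_s V_s \otimes s$, the algebra $\cC(X^{\otimes l}, X^{\otimes l}) = \bigoplus_s \End(V_s)$ is a full matrix algebra on the multiplicity spaces, and a standard semisimple bicommutant calculation (off-diagonal blocks annihilate, diagonal blocks collapse to scalars on $V_s$) forces $x$ into the block-diagonal form $x = \sum_s \mathrm{id}_{V_s} \otimes m_s$ for morphisms $m_s \in \cC(s \otimes Y, s \otimes Y \otimes L)$, with $Y = X^{\otimes k+r}$. This per-simple family is strictly more data than the desired factorization $x = 1_{X^{\otimes l}} \otimes y$ (as one sees already for $\cC = \Rep(\mathbbm{Z}_2)$), so the argument additionally exploits commutation against $\cC(X^{\otimes l+l_0}, X^{\otimes l+l_0})$ for every $l_0 \geq 0$, obtained by enlarging $J$ arbitrarily far into the left half-line of $I^c$. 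Expanding $1_{X^{\otimes l_0}} \otimes x$ in the refined simple decomposition of $X^{\otimes l+l_0}$ and imposing bi-equivariance under the larger matrix algebra yields coherence relations between the various $m_s$; the strong tensor generating hypothesis on $X$ ensures that as $l_0$ varies these relations exhaust every fusion-rule identification among simples, and the family collapses to $m_s = 1_s \otimes y$ for a unique $y \in \cC(Y, Y \otimes L)$, i.e.\ $x = 1_{X^{\otimes l}} \otimes y$.

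Running the mirror-image argument on $\hat x$ and then unwinding the half-braiding yields $y = (1_{X^{\otimes k}} \otimes \sigma_{L, X^{\otimes r}}) \circ (y'' \otimes 1_{X^{\otimes r}})$ for a uniquely determined $y'' \in \cC(X^{\otimes k}, X^{\otimes k} \otimes L) = A(\cC_L,X)_I$. Combining the two factorizations reproduces exactly the inclusion formula $A(\cC_L,X)_I \hookrightarrow A(\cC_L,X)_J$ recorded in Section~\ref{sec:4}, so $x \in A(\cC_L,X)_I$. The main obstacle is the enlargement step: bicommutation inside a single local piece $A(\cC_L,X)_J$ is strictly too weak, and one must genuinely use the full commutation against $A(\cC,X)_{I^c}$ together with the strong tensor generating property of $X$ to collapse the per-simple family to a single morphism; the hypothesis $L \in \cZ(\cC)$ (rather than just $L \in \cC$) enters essentially through the half-braiding that makes the right-side argument go through.
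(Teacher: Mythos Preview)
Your argument is correct and takes a genuinely different route from the paper. The paper's proof is structural: it identifies $A(\cC_L,X)$ as the Q-system realization $|K|$ of an internal-endomorphism Q-system $K\in\cC\boxtimes\cC^{mp}$ over the half-line algebra $A(\cC,X)_{I^c}$, invokes full faithfulness of the embedding $\cC\boxtimes\cC^{mp}\hookrightarrow\text{Bim}(A(\cC,X)_{I^c})$, and then reads off the relative commutant as $\cC(\mathbb{1},K)\cong\cC(X^{\otimes |I|},X^{\otimes |I|}\otimes L)$ via standard subfactor facts. This is short once the machinery of \cite{2304.00068,MR4419534} is in place but is a black box otherwise. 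Your approach is elementary and self-contained: it works entirely inside the fusion category, using only semisimplicity and the explicit local model for $\cC_L$. The trade-off is that your ``collapse'' step, while correct, is stated somewhat impressionistically. A clean way to make it precise: once you have $x=\sum_s \id_{V_s}\otimes m_s$ at level $l$, enlarge by exactly $l_0=n$ to the left, use the isometry $\mathbb{1}\hookrightarrow X^{\otimes n}$ together with the relation $m^{(l+n)}_t=(u^*\otimes 1)(1_{X^{\otimes n}}\otimes m_{\mathbb{1}})(u\otimes 1)$ to get $m^{(l+n)}_t=1_t\otimes m_{\mathbb{1}}$ for \emph{every} simple $t$, and then feed this back through the reconstruction identity $1_{X^{\otimes n}}\otimes m_s=\sum_t \id_{\Hom(t,X^{\otimes n}\otimes s)}\otimes m^{(l+n)}_t$ and faithfulness of $1_{X^{\otimes n}}\otimes(-)$ to force $m_s=1_s\otimes m_{\mathbb{1}}$. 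Your observation that the hypothesis $L\in\cZ(\cC)$ enters precisely through the half-braiding on the right-hand side is a point the paper's proof leaves implicit.
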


\begin{proof}
Note that we can view the quasi-local algebra $A_{\cC_{L}}$ for $A(\cC,X)$ as a finite index extension of $A_{I^{c}}$, supported on the canonical copy of $\cC\boxtimes \cC^{mp}\hookrightarrow \text{Bim}(A_{I^{c}})$ \cite{2304.00068}. Indeed, $A(\cC_{L}, X)$ is the Q-system realization of the Q-system $K\in \cC\boxtimes \cC^{mp}$ obtained from considering $\cC_{L}$ as a $\cC$-$\cC$ bimodule category,  choosing the object $X^{\otimes n}\in \cC_{L}$ and taking internal endomorphisms. In symbols,
$$A(\cC_{L}, X)\cong |K|$$
Since the embedding of $\cC\boxtimes \cC^{mp}\hookrightarrow \text{Bim}(\fA_{I^{c}})$ is fully faithful, by the usual Q-system arguments from subfactor theory \cite{2304.00068, MR4419534}, 
$$A(\cC,X)_{I^{c}}^{\prime}\cap A(\cC_{L},X) \cong 1_{A(\cC,X)_{I^{c}}}\otimes_{\mathbbm{C}} \cC(\mathbbm{1},K)\subseteq |K|.$$

By by the definition of internal end and Q-system realization, under the identification of the realization subspace with our concrete net, we obtain 
$$A(\cC_{L},X) \cong 1_{A(\cC,X)_{I^{c}}}=\cC(X^{\otimes n}, X^{\otimes n}\otimes L)=A(\cC_L, X)_{I}.$$
\end{proof}

\section{Solving the extension problem with DHR bimodules}
\label{sec:5}

Here, we recall the definition of a DHR bimodule on a net of algebras (over $\mathbb{Z}$). First, if $A$ is a (unital) C*-algebra, then a \textit{bimodule} is an algebraic $A$-$A$ bimodule $X$, together with a right $A$-valued inner product $\langle \cdot\ |\ \cdot \rangle: X\times X\rightarrow A$ (making $X$ into a Hilbert A-module) such that the left $A$ action is by adjointable operators. What we call bimodules are more typically called \textit{correspondences} in the C*-algebra literature.

We denote the collection of all $A$-$A$ bimodules by $\text{Bim}(A)$. This forms a C*-tensor category, whose objects are bimodules, morphisms are adjointable bimodule intertwiners, and tensor product is given by the relative tensor product of bimodules, denoted $\boxtimes_{A}$ (see \cite{MR4419534}).

If $X\in \text{Bim}(A)$, a \textit{projective basis} is a finite set $\{b_{i}\}\subseteq X$ such that for all $x\in X$ 

$$x=\sum_{i} b_{i}\langle b_{i} | x\rangle.$$

\begin{defn} Let $A$ be an abstract spin chain with quasi-local algebra $A$. A bimodule over the quasilocal algebra $A$ is called a \textit{DHR bimodule} if there exists an $R$ such that for every interval with $|I|>R$, there exists a projective basis $\{b_i\}$ for $X$ such that $ab_{i}=b_{i}a$ for all $x\in A_{I^{c}}$.
\end{defn}

The collection of DHR bimodules is closed under $\boxtimes_{A}$ and thus defines a full $\otimes$ subcategory of $\text{DHR}(A)\subseteq \text{Bim}(A)$. If $A$ also satisfies \textit{weak algebraic Haag duality} (a property satisfied by fusion spin chains, see \cite{2304.00068}),
then $\text{DHR}(\fA)$ admits a \textit{canonical braiding}. 

The clearest physical interpretation of DHR bimodules as a braided tensor category is in terms of bulk patch operators. One realization of a fusion spin chain is as the ``boundary algebra" of of 2+1D topologically ordered spin system, in particular of Levin-Wen type models~\cite{Levin2005}. In this case, string operators associated to an anyon type $Z\in \mathcal{Z}(\mathcal{C})$ in the bulk can be pushed onto the boundary, giving operators in the fusion spin chain $A(\mathcal{C},X)$ called \textit{patch operators} \cite{PhysRevB.107.155136, inamura202321dsymmetrytopologicalorderlocalsymmetric}. 
If we push the right endpoint off to infinity, this operator no longer lives in $A(\mathcal{C},X)$, but can be thought of formally as an element in a DHR-bimodule since it is localized in an interval around the left endpoint. Similarly to the GNS construction, we can take local perturbations and assemble this into a DHR-bimodule, corresponding to the anyon type $Z\in \mathcal{Z}(\mathcal{C})$. This suggests there should be a close relationship between between $\text{DHR}(A(\mathcal{C}, X))$ and $\mathcal{Z}(\mathcal{C})$, which is demonstrated in the following theorem.

\begin{thm}[\cite{2304.00068}, Theorem C]
    If $X$ is a strong, self-dual tensor generator of the unitary fusion category $\mathcal{C}$, then $\text{DHR}(\fA(\mathcal{C},X))\cong \mathcal{Z}(\mathcal{C})$ as braided  C*-tensor categories.
\end{thm}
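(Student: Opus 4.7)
The plan is to exhibit an explicit braided unitary equivalence $F : \mathcal{Z}(\mathcal{C}) \to \text{DHR}(A(\mathcal{C}, X))$, and then verify full faithfulness and essential surjectivity separately. Given $(Z, \sigma_{-,Z}) \in \mathcal{Z}(\mathcal{C})$, I would build a bimodule $M_Z$ over $A := A(\mathcal{C}, X)$ as the norm-completion of the colimit of the spaces $V_I^Z := \mathcal{C}(X^{\otimes |I|}, X^{\otimes |I|} \otimes Z)$, where for $I \subseteq J$ the connecting map inserts identities on the left and uses the half-braiding $\sigma_{X, Z}$ to slide $Z$ past the $X$-strands added to the right of $I$. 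The left $A(\mathcal{C},X)_I$-action is by precomposition; the right action is by postcomposition combined with a half-braiding slide; and the right $A$-valued inner product comes from the unitary structure of $\mathcal{C}$ together with the standard evaluation and coevaluation on $Z$. That $M_Z$ is DHR follows because, for $|I|$ at least the strong tensor-generation threshold $n$, a basis of $V_I^Z$ obtained from a simple decomposition of $X^{\otimes |I|} \otimes Z$ is a projective basis that commutes with $A(\mathcal{C},X)_{I^c}$, exactly because the half-braiding trivializes the interaction with operators supported far from $I$.

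Functoriality and monoidality are categorical: the isomorphism $M_Z \boxtimes_A M_W \cong M_{Z \otimes W}$ comes from an obvious ``stacking'' map in the colimits. Braidedness amounts to saying that the canonical DHR braiding (defined by swapping two bimodules localized in adjacent intervals using weak algebraic Haag duality) translates under $F$ into the half-braiding on $\mathcal{Z}(\mathcal{C})$; this is a direct computation because both braidings are implemented by comparing the two orderings in which one inserts $Z$ and $W$ onto the chain. For full faithfulness, I would apply Lemma \ref{Haagdualitylem} to identify $\Hom_{\text{DHR}}(M_Z, M_W)$ with a relative commutant of $A(\mathcal{C},X)_{I^c}$ in the universal representation, and then reduce this via a Q-system / Frobenius reciprocity argument to $\mathcal{Z}(\mathcal{C})(Z, W)$.

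The main obstacle is essential surjectivity: showing every DHR bimodule $M$ arises from an object of the Drinfeld center. Given such $M$ with localization constant $R$, and choosing $|I| > \max(R, n)$, the plan is to extract an underlying $Z \in \mathcal{C}$ from the $A(\mathcal{C},X)_I$-bimodule structure on the relative commutant of $A(\mathcal{C},X)_{I^c}$ inside $\End(M)$, and to produce the half-braiding by comparing, via the locality axiom, the two canonical ways to commute the localized data past elements of $\mathcal{C} \hookrightarrow A(\mathcal{C},X)_I$. Here the hypotheses that $X$ is a \emph{strong, self-dual} tensor generator do the real work: strong generation guarantees that $\mathcal{C}$ embeds fully faithfully into the bimodule category of each local algebra, and self-duality aligns the left and right module structures so that the reconstructed datum is genuinely a half-braiding rather than only a pre-half-braiding. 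The cleanest way to package the reconstruction and check it is inverse to $F$ should be the Q-system realization approach of \cite{2304.00068, MR4419534}: view $A(\mathcal{C},X)$ as a finite-index extension of a half-chain subnet whose DHR category is governed by a canonical Lagrangian algebra in $\mathcal{C} \boxtimes \mathcal{C}^{mp}$, and read off $\mathcal{Z}(\mathcal{C})$ as the category of local modules of that Lagrangian.
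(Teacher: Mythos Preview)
The paper does not actually give its own proof of this statement: it is quoted verbatim as Theorem~C of \cite{2304.00068} and used as a black box. The surrounding text only supplies an informal MPO heuristic (the ``defect tensor'' pictures) for why DHR bimodules of a fusion spin chain should be governed by $\mathcal{Z}(\mathcal{C})$, not an argument. So there is nothing in this paper to compare your sketch against; the proof lives entirely in the cited reference.

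That said, your outline is broadly in the spirit of the argument in \cite{2304.00068}. The construction of $M_Z$ as a colimit of $\mathcal{C}(X^{\otimes |I|}, X^{\otimes |I|}\otimes Z)$ with half-braiding connecting maps is exactly the functor used there, and your plan for essential surjectivity via Q-system realization and the embedding $\mathcal{C}\boxtimes\mathcal{C}^{mp}\hookrightarrow\text{Bim}(A_{I^c})$ is the same mechanism that drives both \cite{2304.00068} and Lemma~\ref{Haagdualitylem} here. One caution: invoking Lemma~\ref{Haagdualitylem} for full faithfulness is circular in spirit, since that lemma is proved in this paper \emph{using} the canonical embedding $\mathcal{Z}(\mathcal{C})\cong\text{DHR}(A(\mathcal{C},X))$ already established in \cite{2304.00068}; in a self-contained proof you would need to establish the relative-commutant computation directly from the subfactor/Q-system machinery rather than quoting the lemma. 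Also, your description of what self-duality buys (``aligns the left and right module structures so that the reconstructed datum is genuinely a half-braiding'') is vague; in the actual argument self-duality is used to ensure that the two half-infinite algebras $A_{\le x}$ and $A_{\ge x}$ see the same copy of $\mathcal{C}$ (rather than $\mathcal{C}$ and $\mathcal{C}^{mp}$ separately), which is what lets one assemble a genuine half-braiding from the two one-sided localizations.
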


The above theorem provides an MPO picture for DHR bimodules, in the case of a spatial realization. First, for an object $Z\in \mathcal{Z}(\mathcal{C})$, we consider defect tensors $a$ as in the following picture, where the horizontal string is labeled by an object $Z$ in the Drinfeld center, and $M$ is the associated MPO for the object $Z\in \mathcal{C}$. 
\begin{align}
    \vcenter{\hbox{\includegraphics[page=4]{Figures.pdf}}}
\end{align}
See \cite{MR3614057,Williamson2017SET} for a description of how to decompose the space of defect tensors for MPOs corresponding to a category $\mathcal{C}$ into simple objects of the center $\mathcal{Z}(\mathcal{C})$. 

Next, we can represent the half-braiding associated to $Z$ with an MPO $h$, and in this way extend the action of the fusion ring by channels on local operators to the defect tensors via the following picture. 
\begin{align}
    \vcenter{\hbox{\includegraphics[page=5]{Figures.pdf}}}
\end{align}
The defect sectors are decomposed into irreducible representations under this action, which is similar to Ocneanu's tube algebra \cite{MR0996454}. 

Finally, we consider the vector space of all such defect tensors invariant under the $\mathcal{C}$-symmetry. This is a bimodule under the action of symmetric local operators $L_1,L_2,\dots$ and $R_1,R_2,\dots$ as depicted in the following picture.
\begin{align}
    \vcenter{\hbox{\includegraphics[page=6]{Figures.pdf}}}
\end{align}
By the above theorem, it turns out that these constitute \textit{all} the abstractly defined DHR bimodules in the symmetric case.

One of the main motivations of \cite{2304.00068} 
is that the DHR construction defines a functor $\text{DHR}:\text{Net}_{\mathbbm{Z}}\rightarrow \text{C*-BrTens}$, where $\text{Net}_{\mathbbm{Z}}$ is the groupoid of abstract spin chains and bounded-spread isomorphisms. This associates to each abstract spin chain its braided tensor category of DHR bimodules, while
$$\text{DHR}(\alpha): \text{DHR}(A)\rightarrow \text{DHR}(B)$$
which acts on DHR bimodules by sending a DHR bimodule $X$ of $A$ to $\text{DHR}(\alpha)(X):=X$ as a vector space, with left and right $B$ actions given by
$$b\triangleright x\triangleleft c:=\alpha^{-1}(b)x\alpha^{-1}(c),$$
\noindent with $B$ valued inner product
$$\langle x\ |\ y\rangle_{B}:=\alpha(\langle x\ |\ y\rangle_{A}).$$

\medskip

In particular, for every bounded-spread isomorphism between fusion spin chains 
$$\alpha: A(\mathcal{C},X)\rightarrow A(\mathcal{D},Y),$$
\noindent there is an induced unitary braided equivalence of categories 
$$\text{DHR}(\alpha): \mathcal{Z}(\cC)\cong \text{DHR}( A(\mathcal{C},X))\rightarrow A(\mathcal{D}, Y)\cong \mathcal{Z}(\mathcal{D}).$$

\bigskip

\begin{thm}\label{mainthmalt}
    Let $\alpha: A(\mathcal{C}, X)\rightarrow A(\mathcal{D}, Y)$ be a bounded-spread isomorphism. Then for any commutative, connected Q-systems $L\in \mathcal{Z}(\mathcal{C})$ and $K\in\mathcal{Z}(\mathcal{D})$, extensions of $\alpha$ to bounded-spread isomorphisms $\widetilde{\alpha}:A(\mathcal{C}_{L}, X)\rightarrow A(\mathcal{D}_{K}, Y)$ are in bijective correspondence with algebra isomorphisms $\gamma:\text{DHR}(\alpha)(L)\cong K$ in $\mathcal{Z}(\mathcal{D})$. As a consequence:
    
    \begin{enumerate}
    \item 
    If $\text{DHR}(\alpha)(L)$ is not isomorphic to $K$, then $\alpha$ has no extension to a bounded-spread isomorphism $\widetilde{\alpha}:A(\mathcal{C}_{L}, X)\rightarrow A(\mathcal{D}_{K}, Y)$

    \medskip
    
    \item 
    If $\text{DHR}(\alpha)(L)\cong K$, spatial implementations form a torsor over $\text{Aut}(L)$.
    \end{enumerate}
\end{thm}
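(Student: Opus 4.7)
The plan is to identify the inclusion $A(\mathcal{C}, X) \hookrightarrow A(\mathcal{C}_L, X)$ with the Q-system realization of $L$, viewed as a connected commutative Q-system in $\text{DHR}(A(\mathcal{C}, X)) \cong \mathcal{Z}(\mathcal{C})$. This is the natural viewpoint suggested by the proof of Lemma \ref{Haagdualitylem} and by the general machinery of \cite{2304.00068}. Once this identification is made, extending $\alpha$ to a bounded-spread isomorphism of the enlarged chains should correspond exactly to producing a Q-system isomorphism between the pushforward $\text{DHR}(\alpha)(L)$ and $K$ in $\mathcal{Z}(\mathcal{D})$.

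First I would make precise the statement that $A(\mathcal{C}_L, X)$ is the Q-system realization $|L|$ of $L$ inside $\text{Bim}(A(\mathcal{C}, X))$, and that $L$ together with its full Q-system structure can be reconstructed canonically from the tower of relative commutants $A(\mathcal{C}, X)_{I^c}' \cap A(\mathcal{C}_L, X)$ supplied by Lemma \ref{Haagdualitylem}. An analogous identification $A(\mathcal{D}_K, Y) = |K|$ holds on the other side. This reduces the problem to comparing two Q-system realizations living in two different ambient DHR categories that are related by the braided equivalence $\text{DHR}(\alpha)$.

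Next I would set up the bijection. Given a Q-system isomorphism $\gamma: \text{DHR}(\alpha)(L) \to K$ in $\mathcal{Z}(\mathcal{D})$, functoriality of Q-system realization under braided tensor equivalences should produce a bounded-spread isomorphism $\widetilde{\alpha}_\gamma: |L| \to |K|$ extending $\alpha$, where boundedness of spread is inherited from that of $\alpha$ together with the uniform local structure of the realization. Conversely, any extension $\widetilde{\alpha}$ preserves the tower of relative commutants and hence descends to an isomorphism $\gamma_{\widetilde{\alpha}}$ of the reconstructed Q-systems. Checking that these two constructions are mutually inverse should be a direct computation using the reconstruction of $L$ from relative commutants. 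Parts (1) and (2) of the theorem then follow formally: (1) is immediate from the bijection being empty in the absence of such a $\gamma$, and (2) follows by fixing a reference extension $\widetilde{\alpha}_0$ and noting that $\widetilde{\alpha} \mapsto \widetilde{\alpha}_0^{-1} \circ \widetilde{\alpha}$ identifies the set of extensions with the group of $A(\mathcal{C}, X)$-fixing automorphisms of $A(\mathcal{C}_L, X)$, which by the same reconstruction are exactly Q-system automorphisms of $L$.

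The hardest step, I expect, is verifying that $\text{DHR}(\alpha)$, a priori only a braided C*-tensor equivalence on objects and morphisms, actually transports the full commutative Q-system structure of $L$ coherently, so that $\text{DHR}(\alpha)(L)$ is canonically a commutative Q-system in $\mathcal{Z}(\mathcal{D})$ and its realization $|\text{DHR}(\alpha)(L)|$ is naturally isomorphic, via $\alpha$ alone, to $A(\mathcal{C}_L, X)$ as a spin chain extension. A subtler point inside this is that the multiplication $m:L\otimes L \to L$ and unit $\iota:\mathbb{1}\to L$ must be matched, under the identification $\mathcal{Z}(\mathcal{C})\cong \text{DHR}(A(\mathcal{C},X))$, with the bimodule intertwiners that implement the algebra structure inside $A(\mathcal{C}_L, X)$; once this compatibility is verified, the functoriality of Q-system realization under bounded-spread isomorphisms follows, and the rest of the proof becomes routine Q-system calculus.
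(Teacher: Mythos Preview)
Your approach is essentially the same as the paper's, but you have inverted where the difficulty lies. The step you flag as hardest---transporting the Q-system structure under $\text{DHR}(\alpha)$---is in fact immediate once you recall how $\text{DHR}(\alpha)$ is defined: it leaves the underlying vector space of a bimodule untouched and only twists the left and right $A(\mathcal{D},Y)$-actions by $\alpha^{-1}$. Consequently $\text{DHR}(\alpha)(L)$, realized as $\text{DHR}(\alpha)(A(\mathcal{C}_L,X))$, is literally the algebra $A(\mathcal{C}_L,X)$ with its own multiplication, and an algebra-object isomorphism $\gamma:\text{DHR}(\alpha)(A(\mathcal{C}_L,X))\to A(\mathcal{D}_K,Y)$ is nothing more than a $*$-algebra isomorphism $\widetilde{\alpha}$ that intertwines the twisted actions, i.e.\ satisfies $\widetilde{\alpha}(\alpha^{-1}(b)\,a\,\alpha^{-1}(c))=b\,\widetilde{\alpha}(a)\,c$, which is exactly the condition $\widetilde{\alpha}|_{A(\mathcal{C},X)}=\alpha$. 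So $\gamma$ and $\widetilde{\alpha}$ are the same linear map, reinterpreted; no separate ``functoriality of realization'' or ``reconstruction from relative commutants'' is needed for the bijection itself.

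The genuine technical content, which you pass over with ``boundedness of spread is inherited,'' is the verification that an algebra-object isomorphism $\gamma$ automatically has bounded spread as a map of spin chains. This is precisely where Lemma~\ref{Haagdualitylem} enters: if $a\in A(\mathcal{C}_L,X)_I$ then $a$ commutes with $A(\mathcal{C},X)_{I^c}$, hence $\widetilde{\alpha}(a)$ commutes with $\alpha(A(\mathcal{C},X)_{I^c})\supseteq A(\mathcal{D},Y)_{(I^{+R})^c}$, and the relative Haag duality of the lemma forces $\widetilde{\alpha}(a)\in A(\mathcal{D}_K,Y)_{I^{+R}}$. Your plan is correct, but sharpen the emphasis accordingly.
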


\begin{proof} Let $L\in \mathcal{Z}(\cC)$ be a commutative, connected Q-system. Then $A(\cC_{L}, X)$ is algebraically a bimodule over the quasi-local algebra $A(\cC, X)$ obtained by including the latter as a subalgebra of the former. The canonical condition expectation equips $A(\cC_{L}, X)$ with the structure of a right Hilbert $A(\cC, X)$ module, hence $(A(\cC_{L}, X),E)\in \text{Bim}(A(\cC, X))$. In fact, from the canonical embedding $\mathcal{Z}(\cC)\cong \text{DHR}(A(\cC,X))\subseteq \text{Bim}(A(\cC,X))$, we have that the quasi-local algebra with condition expectation $A(\cC_{L}, X),E)$ is simply the Q-system realization $|L|$ of $L\in \mathcal{Z}(\mathcal{C})$, and in particular $(A(\cC_{L}, X),E)\in \text{DHR}(A(\cC, X))$ is a connected, commutative Q-system.

With this picture in hand, to establish our bijection, we simply show that the data of one thing can simply be reinterpreted as the data of the other. 

First, suppose $\widetilde{\alpha}: A(\mathcal{C}_{L}, X)\rightarrow A(\mathcal{D}_{K}, Y) $ is a bounded-spread isomorphism, restricting to $\alpha$ on the subsystem $A(\mathcal{C}, X)$. We claim that $\widetilde{\alpha}$ \textit{is} an algebra object isomorphism $\text{DHR}(\alpha)(A(\mathcal{C}_{L}, X)) \cong A(\mathcal{D}_{K}, X)$. Indeed, $\text{DHR}(\alpha)(A(\mathcal{C}_{L}, X))$ is simply $A(\mathcal{C}_{K}, X)$ with left and right action twisted by $\alpha^{-1}$. We check for $a\in A(\mathcal{C}_{L}, X)$ and $b,c\in A(\mathcal{D}, Y)$ 
$$\widetilde{\alpha}(b\triangleright a\triangleleft c)=\widetilde{\alpha}(\alpha^{-1}(b)a\alpha^{-1}(c))=b\widetilde{\alpha}(a) c$$
\noindent where we have used $\widetilde{\alpha}|_{A(\mathcal{C}, X)}=\alpha$. Thus $\widetilde{\alpha}:\text{DHR}(\alpha)(A(\mathcal{C}_{L}, X))\cong A(\mathcal{D}_{K}, X)$ is an isomorphism of $A(\mathcal{D}, Y)$ bimodules. Since it is an algebra isomorphism, this bimodule isomorphism is an isomorphism of algebra objects.

Conversely, suppose we have an isomorphism of algebra objects $\widetilde{\alpha}:\text{DHR}(\alpha)(A(\mathcal{C}_{L}, X))\cong A(\mathcal{D}_{K}, X) $ in the category of $A(\mathcal{D}, Y)$ bimodules. Then reversing the above logic, $\widetilde{\alpha}$ is an algebra isomorphism $A(\mathcal{C}_{L}, X)\rightarrow A(\mathcal{D}_{K}, Y)$ which restricts to $\alpha$ on the algebra $A(\mathcal{C}, X)$. It remains to show that $\widetilde{\alpha}$ is bounded spread.

Suppose $\alpha$ has spread at most $R$. We claim $\widetilde{\alpha}$ has spread at most $R$. Indeed, let $a\in A(\mathcal{C}_{L}, X)_{I}$. Note that $A(\mathcal{D},Y)_{(I^{+R})^{c}}\subseteq \alpha(A(\mathcal{C},X)_{I^{c}})$, and since $[a,b]=0$ for all $b\in A(\mathcal{C},X)_{I^{c}}$, 
$$0=[\widetilde{\alpha}(a), \widetilde{\alpha}(b)]=[\widetilde{\alpha}(a), \alpha(b)],$$
and hence 
$$[\widetilde{\alpha}(A(\mathcal{C}_{L}, X)_{I}), A(\mathcal{D}, Y)_{(I^{+R})^{c}}]=0.$$
\noindent By Lemma \ref{Haagdualitylem}, this implies
$$\widetilde{\alpha}(A(\mathcal{C}_{L}, X)_{I})\subseteq A(\mathcal{C}_{K}, X)_{I^+R},$$
\noindent so that $\widetilde{\alpha}$ is bounded spread. This concludes the proof of the main statement of the theorem.

The first consequence listed above follows  immediately. The second follows from the fact that for any isomorphisms $\gamma, \gamma^{\prime}: \text{DHR}(\alpha(L))\cong K$,  $\gamma^{\-1}\circ\gamma^{\prime}\in \text{Aut}(\text{DHR}(\alpha(L))\cong \text{Aut}(L)$.

\end{proof}.

We recommend the reader to compare the above theorem with an analogous result concerning extending symmetries in 1+1D algebraic quantum field theory, \cite[Section 6]{Bischoff2019}. 

\bigskip

\begin{proof}[\unskip\nopunct]\textbf{Proof of Theorem \ref{mainthm}}. The Theorem \ref{mainthm} stated in the introduction follows as a special case of the above, and we believe it provides a satisfactory answer to Question \ref{dualityquest}. Indeed, consider the case where $L \in \mathcal{Z}(\mathcal{C})$ and $K\in \mathcal{Z}(\mathcal{D})$ are $Z(\mathcal{M})$ and $Z(\mathcal{N})$ associated to the $\mathcal{C}$ and $\mathcal{D}$ module categories, respectively. Then we have identifications of the spatial realizations $A(\mathcal{C},X)_{\mathcal{M}}\cong A(\mathcal{C}_{Z(\mathcal{M})}, X)$, whence the first part of the result follows from the previous theorem. That $\text{Aut}(Z(\mathcal{M}))\cong (\mathcal{C}^{*}_{\mathcal{M}})^{\times}$ follows, for example, from \cite{MR4357481}. 
\end{proof}
\bigskip

\noindent For any algebra $L\in \mathcal{Z}(\mathcal{C})$, there is a group
$$\text{Aut}_{br}(\mathcal{Z}(\cC)\ |\ L)$$
\noindent which consists of equivalence classes of pairs $(\alpha, \gamma)$, where $\alpha$ is a braided tensor equivalence of $\mathcal{Z}(\cC)$ and $\gamma: \alpha(L)\cong L$ is an isomorphism of algebras. An equivalence between pairs $(\alpha, \gamma)$ and $(\beta, \delta)$ consists of a monoidal natural isomorphism $\nu: \alpha\cong \beta$ such that $\delta\circ \nu_{L}=\gamma$. By \cite{Davydov2013a,Schatz24}, 
for a Lagrangian algebra $L$ corresponding to an indecomposable $\mathcal{C}$-module category $\mathcal{M}$, we have
$$\text{Aut}_{br}(\mathcal{Z}(\cC)\ |\ L)\cong \text{Aut}_{\otimes}(\mathcal{C}^{*}_{\mathcal{M}})$$
where the latter is the group of tensor autoequivalences up to monoidal natural isomorphism.

Let $\text{QCA}(\mathcal{C},X,\mathcal{M})$ be the group of bounded-spread autoequivalences of $A(\mathcal{C},X)_{\mathcal{M}}$ which preserve the subalgebra $A(\mathcal{C}, X)$. We can think of this as the group of spatially implemented dualities, or as the group of ``edge-restricted" QCA that preserve the symmetric subalgebra. Then the above theorem yields the following corollary:

\begin{cor}
There is a homomorphism  $\pi:\text{QCA}(\mathcal{C},X,\mathcal{M})\rightarrow \text{Aut}_{\otimes}(\mathcal{C}^{*}_{\mathcal{M}})$.
\end{cor}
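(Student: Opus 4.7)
The plan is to package the data produced by Theorem~\ref{mainthmalt} into a group homomorphism, and then identify the target using the cited equivalence $\text{Aut}_{br}(\mathcal{Z}(\mathcal{C})\mid L)\cong \text{Aut}_{\otimes}(\mathcal{C}^{*}_{\mathcal{M}})$. Fix $L:=Z(\mathcal{M})\in \mathcal{Z}(\mathcal{C})$, so that we have the identification $A(\mathcal{C},X)_{\mathcal{M}}\cong A(\mathcal{C}_{L},X)$ from the remark preceding Lemma~\ref{Haagdualitylem}. Given $\widetilde{\alpha}\in \text{QCA}(\mathcal{C},X,\mathcal{M})$, its restriction $\alpha:=\widetilde{\alpha}|_{A(\mathcal{C},X)}$ is a bounded-spread autoequivalence of $A(\mathcal{C},X)$, which by functoriality of DHR induces a braided autoequivalence $\text{DHR}(\alpha)$ of $\mathcal{Z}(\mathcal{C})$. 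Applying Theorem~\ref{mainthmalt} (with $\mathcal{C}=\mathcal{D}$ and $L=K$), the fact that $\widetilde{\alpha}$ is an extension of $\alpha$ is equivalent to the existence of a distinguished algebra isomorphism $\gamma_{\widetilde{\alpha}}:\text{DHR}(\alpha)(L)\xrightarrow{\cong} L$ in $\mathcal{Z}(\mathcal{C})$, namely the one obtained by reinterpreting $\widetilde{\alpha}$ itself as a map of algebra objects inside $\text{Bim}(A(\mathcal{C},X))$. Define
\[
\pi(\widetilde{\alpha}):=\bigl[\bigl(\text{DHR}(\alpha),\,\gamma_{\widetilde{\alpha}}\bigr)\bigr]\in \text{Aut}_{br}(\mathcal{Z}(\mathcal{C})\mid L).
\]

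Next I would check the homomorphism property. For $\widetilde{\alpha},\widetilde{\beta}\in \text{QCA}(\mathcal{C},X,\mathcal{M})$ with restrictions $\alpha,\beta$ to $A(\mathcal{C},X)$, functoriality of DHR gives $\text{DHR}(\alpha\beta)=\text{DHR}(\alpha)\circ \text{DHR}(\beta)$. For the algebra-isomorphism component, unwinding the proof of Theorem~\ref{mainthmalt}, $\gamma_{\widetilde{\alpha}\widetilde{\beta}}$ is the map of $A(\mathcal{C},X)$-bimodules obtained by viewing $\widetilde{\alpha}\widetilde{\beta}$ as an algebra-object morphism $\text{DHR}(\alpha\beta)(L)\to L$. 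Since $\widetilde{\alpha}\widetilde{\beta}=\widetilde{\alpha}\circ\widetilde{\beta}$ as maps of the underlying algebra, this factors as
\[
\gamma_{\widetilde{\alpha}\widetilde{\beta}}=\gamma_{\widetilde{\alpha}}\circ \text{DHR}(\alpha)(\gamma_{\widetilde{\beta}}),
\]
which is exactly the composition law in $\text{Aut}_{br}(\mathcal{Z}(\mathcal{C})\mid L)$. This reduces to chasing the definition of the DHR action on morphisms (the $\alpha^{-1}$-twisted bimodule structure) and the fact that $\widetilde{\alpha}$, $\widetilde{\beta}$ restrict to $\alpha$, $\beta$ on $A(\mathcal{C},X)$, so there are no new nontrivial coherences to check.

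Finally, I would invoke the equivalence of groups $\text{Aut}_{br}(\mathcal{Z}(\mathcal{C})\mid L)\cong \text{Aut}_{\otimes}(\mathcal{C}^{*}_{\mathcal{M}})$ for the Lagrangian algebra $L=Z(\mathcal{M})$, as recorded in the excerpt (and proved in the references cited there), to obtain the desired map $\pi:\text{QCA}(\mathcal{C},X,\mathcal{M})\to \text{Aut}_{\otimes}(\mathcal{C}^{*}_{\mathcal{M}})$. The step I expect to be the main obstacle is verifying the composition law for $\gamma_{\widetilde{\alpha}\widetilde{\beta}}$: one must be careful that the ``$\widetilde{\alpha}$ as an algebra-object map'' interpretation used in the proof of Theorem~\ref{mainthmalt} is compatible, under composition, with the twisted bimodule structures defining $\text{DHR}(\alpha)$ and $\text{DHR}(\beta)$. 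Everything else is formal: well-definedness of $\pi(\widetilde{\alpha})$ up to equivalence of pairs follows because any two choices of the unitary witnessing $\widetilde{\alpha}$ as an algebra-object isomorphism differ by an element of $\text{Aut}(L)$, which precisely corresponds to the equivalence relation defining $\text{Aut}_{br}(\mathcal{Z}(\mathcal{C})\mid L)$.
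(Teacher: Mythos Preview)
Your proposal is correct and follows exactly the approach the paper intends: the paper states the corollary as an immediate consequence of Theorem~\ref{mainthmalt} together with the identification $\text{Aut}_{br}(\mathcal{Z}(\mathcal{C})\mid L)\cong \text{Aut}_{\otimes}(\mathcal{C}^{*}_{\mathcal{M}})$, without writing out any further details. Your added verification of the composition law for $(\text{DHR}(\alpha),\gamma_{\widetilde{\alpha}})$ is more than the paper provides; note only that your final remark about well-definedness is unnecessary, since the bijection in Theorem~\ref{mainthmalt} assigns to $\widetilde{\alpha}$ a \emph{unique} $\gamma_{\widetilde{\alpha}}$ (it \emph{is} $\widetilde{\alpha}$, reinterpreted), so there is no ambiguity to quotient out.
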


We can easily choose examples so that this is surjective. Indeed, consider the object $X=\bigoplus_{Y\in \text{Irr}(\mathcal{C})} Y$. Then $\pi: \text{QCA}(\mathcal{C}, X, \mathcal{C})\rightarrow \text{Aut}_{\otimes}(\mathcal{C})$ is surjective (see \cite{2304.00068}).

A family of examples of interest, which arise from onsite finite group symmetry, are when $\mathcal{C}=\text{Rep}(G)$, $\mathcal{M}=\text{Hilb}_{f.d.}$, made into a $\mathcal{C}$-module category with the standard fiber functor structure on $\text{Rep}(G)$. In this case, we obtain a homomorphism 
$$\pi: \text{QCA}(\text{Rep}(G),X,\text{Hilb}_{f.d.})\rightarrow \text{Aut}_{\otimes}(\text{Hilb}_{f.d.}(G))\cong H^{2}(G, \text{U}(1))\rtimes \text{Aut}(G).$$

\bigskip

\section{Examples and Applications}
\label{sec:6}

In this section, we consider a family of examples generalizing Kramers-Wannier (KW) dualities \cite{PRXQuantum.4.020357}, and we also discuss applications of our previous result.

\bigskip

\noindent \textbf{Generalized shifts.} Let $\mathcal{C}$ be a unitary fusion category, and $\alpha\in \text{Aut}_{br}(\mathcal{Z}(\mathcal{C}))$, such that $\alpha$ generates a categorical action of $\mathbbm{Z}/n\mathbbm{Z}$ on $\mathcal{Z}(\mathcal{C})$ for some $n$. Then the $o_{4}$ obstruction necessarily vanishes as $H^{4}(\mathbbm{Z}/n\mathbbm{Z}, U(1))$ is trivial, and thus by \cite{MR2677836} there exists a $\mathbbm{Z}/n\mathbbm{Z}$-graded extension 

$$\mathcal{D}:=\bigoplus_{g\in \mathbbm{Z}/n\mathbbm{Z}} \mathcal{C}_{g},$$

\noindent with $\mathcal{C}_{e}=\mathcal{C}$.
For any $Y\in \mathcal{C}_{\alpha}$, $Y^{\otimes n}\in \mathcal{C}_{e}=\mathcal{C}$. We can always choose $Y$ such that $X:=Y^{\otimes n}$ is a strong tensor generator for $\mathcal{C}$, for example, $Y=\bigoplus_{Z\in \text{Irr}(\mathcal{C}_{\alpha})} Z$. 

Now, consider the fusion spin chain $A(\mathcal{C},X)$. 
We define a ``generalized translation" QCA for $a\in A(\mathcal{C},X)_{[i,j]}\cong \cC(X^{j-i+1}, X^{j-i+1})\cong \cD(Y^{n(j-i+1)},Y^{n(j-i+1)})$,
$$\tau_{\alpha}(a):=1_{Y^{n-1}}\otimes a\otimes 1_{Y}\in A(\mathcal{C},X)_{[i-1,j]}.$$
Then since for any $Z\in \mathcal{Z}(\mathcal{C})$ we have natural isomorphisms $Y\otimes Z\cong \alpha(Z)\otimes Y$, it is easy to verify that 
$$\text{DHR}(\tau_{\alpha})\cong \alpha.$$
Thus, any braided autoequivalence of $\mathcal{Z}(\cC)$ can be implemented by a QCA on the \text{some} fusion spin chain of the form $A(\mathcal{C}, X)$, although the choice of $X$ depends on the autoequivalence.

\bigskip

\noindent \textbf{Dualities from Tambara-Yamagami}. The easiest example is the case $\mathcal{C}:=\text{Hilb}_{f.d.}(A)$, where $A$ is a finite abelian group. $\mathcal{Z}(\mathcal{C})\cong \text{Hilb}_{f.d.}(A\times \widehat{A})$. Any symmetric, non-degenerate bicharacter $\chi:A\times A\rightarrow U(1)$, yields an isomorphism $\widetilde{\chi}:A\rightarrow \widehat{A}$, given by $\widetilde{\chi}(x)=\chi(a, \cdot)$. This results in a braided autoequivalence $\alpha_{\chi}\in \text{Aut}_{br}(\text{Hilb}_{f.d.}(A\times \widehat{A}))$ defined by
$$\alpha_{\chi}(a, f)=(\widetilde{\chi}^{-1}(f),\widetilde{\chi}(a)).$$
Which further results in a $\mathbbm{Z}/2\mathbbm{Z}$ extension of $\text{Hilb}_{f.d.}(A)$ called a \textit{Tambara-Yamagami} category \cite{TAMBARA1998692}, whose simple objects are simply $A\cup \{\rho\}$ with fusion rules 
$$a\otimes b=ab\, ,$$
$$a\otimes \rho=\rho=\rho\otimes a\, ,$$
$$\rho\otimes \rho\cong \bigoplus_{a\in A} a \, .$$

If we set $Y=\rho$, then $X:=Y^{2}=\bigoplus_{a\in A} a$ is a strong tensor generator for $\mathcal{C}$, and thus the above construction yields a QCA on the spin chain
$$\tau_{\alpha_\chi}: A(\mathcal{C}, X)\rightarrow A(\mathcal{C}, X)$$
\noindent with $\text{DHR}(\tau_{\alpha_\chi})\cong \alpha_{\chi}$. For $A=\mathbbm{Z}_{2}$, there is a unique symmetric, non-degenerate bicharacter $\chi$, and $\tau_{\alpha_{\chi}}$ is exactly the Kramers-Wannier duality map described earlier.

There are two canonical spatial realizations of $A(\mathcal{C}, X)$ described by the Lagrangian algebras $L_{1}:=\mathbbm{C}[A]\times 1$ and $L_{2}:=1\times \mathbbm{C}[\widehat{A}]$, sometimes called electric and magnetic algebras. Then $\alpha_{\chi}$ swaps $L_{1}$ and $L_{2}$. In particular, by Theorem~\ref{mainthm},
$\tau_{\alpha_{\chi}}$ cannot extend to a QCA on either spatial realization $A(\mathcal{C}_{L_1}, X)$ or $A(\mathcal{C}_{L_2}, X)$.

\bigskip

\noindent \textbf{Q-system complete fusion categories}. A unitary fusion category $\mathcal{C}$ is called \textit{Q-system complete} (also called \textit{torsion-free in the literature} if there is exactly one indecomposable module category (namely, $\mathcal{C}$ itself) up to equivalence. This terminology comes from the fact that the fusion category $\mathcal{C}$, thought of as a 2-category with one object, is Q-system complete in the sense of \cite{MR4419534}. 
Examples include $\text{Fib}$ and more generally $\text{PSU}(2)_{2k+1}$, and $\text{Hilb}_{f.d.}(\mathbbm{Z}/p\mathbbm{Z}, \omega)$, where $[\omega]\in H^{3}(\mathbbm{Z}/p\mathbbm{Z})$ is non-trivial.

For Q-system complete fusion categories $\mathcal{C}$, any fusion spin chain of the form $A(\mathcal{C}, X)$ has  only one spatial realization $A(\mathcal{C}, X)_{\mathcal{C}}$, which in turn corresponds to a unique isomorphism class of Lagrangian algebra in $\mathcal{Z}(\mathcal{C})$. This results in the following corollary of Theorem \ref{mainthm}.

\begin{prop} Let $\mathcal{C}$ be a Q-system complete fusion category. Suppose $\alpha: A(\mathcal{C}, X)\rightarrow A(\mathcal{D}, Y)$ is a bounded-spread isomorphism between fusion spin chains. Then

\begin{enumerate}
\item 
$\mathcal{C}\cong \mathcal{D}$ as fusion categories and in particular $\mathcal{D}$ is Q-system complete.

\medskip

\item 
There exists a spatial implementation  $\widetilde{\alpha}:A(\mathcal{C}, X)_{\cC}\rightarrow A(\mathcal{D}, Y)_{\cD}$ of $\alpha.$

\medskip

\item 
The spatial implementations form a torsor over $\text{Inv}(\cC)$. In particular, if $\mathcal{C}$ is has no non-trivial invertible objects (e.g $PSU(2)_{2k+1}$), then there is a unique spatial implementation.
\end{enumerate}
\end{prop}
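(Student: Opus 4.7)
The plan is to derive all three parts directly from Theorem~\ref{mainthm} after unpacking what Q-system completeness says about $\mathcal{Z}(\mathcal{C})$. The key translation is the Ostrik-style correspondence $\mathcal{M}\mapsto Z(\mathcal{M})$: indecomposable (left) $\mathcal{C}$-module categories are in bijection with isomorphism classes of Lagrangian algebras in $\mathcal{Z}(\mathcal{C})$. Hence $\mathcal{C}$ being Q-system complete is equivalent to $\mathcal{Z}(\mathcal{C})$ carrying a unique Lagrangian algebra up to isomorphism.

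For part (1), I would first note that $\alpha$ induces a braided equivalence $\text{DHR}(\alpha): \mathcal{Z}(\mathcal{C}) \to \mathcal{Z}(\mathcal{D})$. Since braided equivalences send Lagrangian algebras to Lagrangian algebras and respect isomorphism classes, uniqueness transports to $\mathcal{Z}(\mathcal{D})$, proving that $\mathcal{D}$ is Q-system complete. The same braided equivalence of centers certifies that $\mathcal{C}$ and $\mathcal{D}$ are Morita equivalent as fusion categories, so $\mathcal{D} \simeq \mathcal{C}^{*}_{\mathcal{M}}$ for some indecomposable $\mathcal{C}$-module category $\mathcal{M}$. Q-system completeness forces $\mathcal{M} \simeq \mathcal{C}$, and the standard identification $\mathcal{C}^{*}_{\mathcal{C}} \simeq \mathcal{C}$ (via $c \mapsto c \otimes (-)$) yields $\mathcal{D} \simeq \mathcal{C}$.

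For parts (2) and (3), I apply Theorem~\ref{mainthm} with $\mathcal{M} = \mathcal{C}$ and $\mathcal{N} = \mathcal{D}$: spatial implementations biject with algebra isomorphisms $\text{DHR}(\alpha)(Z(\mathcal{C})) \cong Z(\mathcal{D})$ in $\mathcal{Z}(\mathcal{D})$, and the set of implementations is a torsor over $\Aut(Z(\mathcal{C})) \cong \Inv(\mathcal{C}^{*}_{\mathcal{C}}) \cong \Inv(\mathcal{C})$. Part (2) is then immediate: $\text{DHR}(\alpha)(Z(\mathcal{C}))$ is Lagrangian in $\mathcal{Z}(\mathcal{D})$, and the uniqueness established in (1) forces $\text{DHR}(\alpha)(Z(\mathcal{C})) \cong Z(\mathcal{D})$, so an algebra isomorphism exists. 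Part (3) is the torsor statement, which collapses to a single point whenever $\Inv(\mathcal{C})$ is trivial, as for $\text{PSU}(2)_{2k+1}$.

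The main obstacle I anticipate is just the bookkeeping in part (1): one must verify that $\mathcal{C}^{*}_{\mathcal{C}}$ is genuinely equivalent to $\mathcal{C}$ as a fusion category, which follows from the standard equivalence between $\mathcal{C}$-linear endofunctors of $\mathcal{C}$ under the regular action and $\mathcal{C}$ itself. Beyond this convention check, there is no genuinely new mathematical content: everything reduces to Theorem~\ref{mainthm} combined with the module-Lagrangian correspondence.
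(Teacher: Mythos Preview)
Your proposal is correct and follows essentially the same approach as the paper: transport the unique Lagrangian under the braided equivalence $\text{DHR}(\alpha)$, then invoke Theorem~\ref{mainthm}. The one minor difference is in part~(1): where you pass through Morita equivalence and then use Q-system completeness to pin down the module category, the paper argues more directly by identifying $\mathcal{C}\cong\mathcal{Z}(\mathcal{C})_{L}$ and $\mathcal{D}\cong\mathcal{Z}(\mathcal{D})_{K}$ and observing that $\text{DHR}(\alpha)$ carries $L$ to $K$, hence induces the monoidal equivalence on module categories without invoking the Morita-equivalence-from-equivalent-centers theorem.
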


\begin{proof}
    Let $L$ denote the unique Lagrangian algebra object in $\text{DHR}( A(\mathcal{C}, X))\cong \mathcal{Z}(\cC)$, so that $\mathcal{C}\cong \mathcal{Z}(\cC)_{L}$. Then $\text{DHR}(\alpha)(L)\in \text{DHR}(A(\mathcal{D}, Y))\cong \mathcal{Z}(\cD)$ is the unique Lagrangian algebra (since $\text{DHR}(\alpha)$ is a braided equivalence). In particular, we have $K:=\text{DHR}(L)$ must be the canonical Lagrangian algebra in $\mathcal{Z}(\cD)$, hence $\mathcal{Z}(\cD)_{K}\cong \mathcal{D}$, and thus $\text{DHR}(\alpha)$ induces a monoidal equivalence  $\mathcal{C}\cong \mathcal{Z}(\cC)_{L}\rightarrow \mathcal{Z}(\mathcal{D})_{K}\cong \mathcal{D}$.

    Since the spatial realizations must correspond to the Lagrangian algebras $L$ and $K$ respectively, the existence of a spatial implementation of $\alpha$, and the statement about the torsor, follows immediately from Theorem \ref{mainthm}.
\end{proof}

\bigskip

\noindent \textbf{Classification of QCA.} In this section, apply our above analysis to give a better understanding of QCA which commute with a group symmetry, and connect our perspective to the literature. Suppose we have a group $G$ acting unitarily on the local Hilbert space $V:=\mathbbm{C}^{d}$, given by an assignment $g\mapsto U_{g}\in U(d)$. If we let $A$ denote the spin system  This extends to the automorphism of each $A_{I}$ by $\alpha_{g}(a):=(U_{g})^{\otimes n} a (U^{*}_{g})^{\otimes n}$, which extends to a homomorphism $G\rightarrow \text{Aut}(A)$. We assume this homomorphism is injective.

In the literature, there have been several investigations into \textit{symmetric QCA (sQCA)} \cite{PhysRevLett.124.100402}, which in our language are bounded-spread isomorphisms $\alpha: A\rightarrow A$ such that ${\alpha\circ \alpha_{g}=\alpha_{g}\circ \alpha}$ for all $g\in G$. Our goal in this section is to make contact with these results from our perspective.

First, we recast this picture into the language of fusion spin chains by noting that $A^{G}$ has the structure of an abstract spin chain with local algebras $A^{G}_{I}:=\{x\in A_{I}\ : \alpha_{g}(x)=x\ \text{for all}\ g\in G\}$. 

Now if we view $V$ as an object in $\text{Rep}(G)$, and we further assume that $V$ is self-dual and strongly tensor generating (e.g. $V\cong L^{2}(G)$ is the left regular representation), then 
$$A^{G}\cong A(\text{Rep}(G), V).$$

The inclusion $A^{G}\subseteq A$ is a spatial realization which corresponds to the standard fiber functor $\text{Rep}(G)\rightarrow \text{Hilb}_{f.d.}$. This leads to the following natural question:

\begin{question}
    Suppose $\alpha: A^{G}\rightarrow A^{G}$ is a duality and $\widetilde{\alpha}:A\rightarrow A$ is a spatial implementation. Under what conditions the spatial implementation $\widetilde{\alpha}$ a symmetric QCA?
\end{question}

To answer this question, note that the Lagrangian algebra resulting from this spatial realization in $\mathcal{Z}(\text{Rep}(G))$ can actually be viewed as an algebra object in $\text{Rep}(G)\subseteq \mathcal{Z}(\text{Rep}(G))$, where we identify $\text{Rep}(G)$ with the full subcategory of its center using the standard symmetric braiding. The algebra is simply $L:=\text{Fun}(G)$, the commutative algebra of $\mathbbm{C}$-valued functions, viewed as a $G$-representation where $G$ acts by translation~\cite{MR3242743}.  

\begin{thm}\label{thm:sQCA}
    Suppose $\alpha: A^{G}\rightarrow A^{G}$ is a duality and $\widetilde{\alpha}:A\rightarrow A$ is a spatial implementation, and $\gamma: \text{DHR}(\alpha)(L)\cong L$ is the associated algebra isomorphism. Then
    
    \begin{enumerate}
    \item     
    $\widetilde{\alpha}$ is a symmetric QCA if and only if for every $\beta\in \text{Aut}(L)$,  $\gamma\circ \text{DHR}(\alpha)(\beta)=\beta \circ \gamma$.

    \medskip
    
    \item 
    There is a homomorphism from the group of sQCA on $A$ to $H^{2}(G, \text{U}(1))$.
    \end{enumerate}
\end{thm}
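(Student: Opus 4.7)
The plan for part (1) is to translate the sQCA condition $\widetilde{\alpha} \circ \alpha_g = \alpha_g \circ \widetilde{\alpha}$ directly into the algebra-isomorphism picture supplied by Theorem \ref{mainthmalt}. First I would observe that each on-site symmetry $\alpha_g: A \to A$ is itself a spatial implementation of the identity duality $\text{id}_{A^G}$, so by Theorem \ref{mainthmalt} it corresponds to some algebra automorphism $\beta_g \in \text{Aut}(L)$. Since $G$ is assumed to act faithfully, and since $\text{Aut}(L) \cong \text{Inv}(\text{Hilb}_{f.d.}(G)) \cong G$ by Theorem \ref{mainthm} applied to $\mathcal{C} = \text{Rep}(G)$ and $\mathcal{M} = \text{Hilb}_{f.d.}$, the assignment $g \mapsto \beta_g$ realizes the on-site $G$-action as a canonical copy of $G$ inside the $\text{Aut}(L)$-torsor of spatial implementations of the identity duality.

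Next I would establish the composition rule for the bijection of Theorem \ref{mainthmalt}: if $\widetilde{\alpha}_i$ extends $\alpha_i$ with data $\gamma_i : \text{DHR}(\alpha_i)(L) \xrightarrow{\sim} L$ for $i = 1, 2$, then $\widetilde{\alpha}_2 \circ \widetilde{\alpha}_1$ extends $\alpha_2 \circ \alpha_1$ with data $\gamma_2 \circ \text{DHR}(\alpha_2)(\gamma_1)$. This is immediate from the proof of Theorem \ref{mainthmalt}, where the datum $\gamma$ is identified with the map $\widetilde{\alpha}$ itself viewed as a morphism of algebras in DHR bimodules. Applying this composition rule to $\widetilde{\alpha} \circ \alpha_g$ (which has data $\gamma \circ \text{DHR}(\alpha)(\beta_g)$) and to $\alpha_g \circ \widetilde{\alpha}$ (which has data $\beta_g \circ \gamma$, since $\text{DHR}(\alpha_g) = \text{id}_{\mathcal{Z}(\text{Rep}(G))}$), and invoking the uniqueness half of Theorem \ref{mainthmalt}, we conclude that $\widetilde{\alpha}$ commutes with $\alpha_g$ if and only if $\gamma \circ \text{DHR}(\alpha)(\beta_g) = \beta_g \circ \gamma$. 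Part (1) then follows because $\beta_g$ exhausts $\text{Aut}(L)$ as $g$ ranges over $G$.

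For part (2) the plan is to compose the assignment $\widetilde{\alpha} \mapsto (\text{DHR}(\alpha), \gamma)$ with the group isomorphism
$$\text{Aut}_{br}\bigl(\mathcal{Z}(\text{Rep}(G)) \mid L\bigr) \;\cong\; \text{Aut}_\otimes(\text{Hilb}_{f.d.}(G)) \;\cong\; H^2(G, \text{U}(1)) \rtimes \text{Aut}(G)$$
recalled just before Theorem \ref{thm:sQCA}. Rewriting the condition from part (1) as $\text{DHR}(\alpha)(\beta) = \gamma^{-1}\beta\gamma$ and transporting along $\gamma$, the induced action of $(\text{DHR}(\alpha), \gamma)$ on $\text{Aut}(L) \cong G$ is the identity. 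Consequently the associated monoidal autoequivalence of $\text{Hilb}_{f.d.}(G)$ fixes every simple object, its image in the $\text{Aut}(G)$ factor is trivial, and it therefore lies in the normal subgroup $H^2(G,\text{U}(1)) \times \{1\}$, on which the set-theoretic projection onto $H^2(G, \text{U}(1))$ restricts to a genuine group homomorphism. Composing with the multiplicative assignment $\widetilde{\alpha} \mapsto (\text{DHR}(\alpha), \gamma)$ (multiplicative by the composition rule from part (1)) yields the sought-after homomorphism.

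The main obstacle I anticipate is the careful bookkeeping in part (1) — in particular confirming that $\text{DHR}(\alpha_g) = \text{id}_{\mathcal{Z}(\text{Rep}(G))}$ (so that the composition formula for $\alpha_g \circ \widetilde{\alpha}$ genuinely collapses to $\beta_g \circ \gamma$), and ensuring that the identification $g \mapsto \beta_g$ matches the standard isomorphism of $G$ with the group of isomorphism classes of invertible objects of $\text{Hilb}_{f.d.}(G)$, so that the restriction of $\tilde{F} \in \text{Aut}_\otimes(\text{Hilb}_{f.d.}(G))$ to simples is precisely the transported action used in part (2). Once these compatibilities are pinned down, both statements are direct consequences of Theorem \ref{mainthmalt} and the standard classification of monoidal autoequivalences of pointed fusion categories.
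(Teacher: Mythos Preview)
Your proposal is correct and follows essentially the same line as the paper's proof: both hinge on identifying the on-site symmetries $\alpha_g$ with the full automorphism group $\text{Aut}(L)\cong G$ via the torsor statement of Theorem~\ref{mainthm}, and then reading the commutation condition $\widetilde{\alpha}\circ\alpha_g=\alpha_g\circ\widetilde{\alpha}$ through the bijection of Theorem~\ref{mainthmalt}. The paper's write-up is terser --- it invokes the spanning fact $\mathbb{C}[\text{Aut}_{\text{Rep}(G)}(L)]=\text{End}_{\text{Rep}(G)}(L)$ and then declares part~(1) proved, and for part~(2) it simply cites \cite{Schatz24} --- whereas you make the mechanism explicit by deriving the composition rule $\gamma_2\circ\text{DHR}(\alpha_2)(\gamma_1)$ and tracing the image into $H^2(G,\text{U}(1))\rtimes\text{Aut}(G)$ by hand. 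Your flagged obstacle about $\text{DHR}(\alpha_g)=\text{id}$ is a non-issue: $\alpha_g$ restricts to the identity on $A^G$, so $\text{DHR}(\alpha_g|_{A^G})=\text{DHR}(\text{id}_{A^G})$ is literally the identity functor; no further argument is needed there.
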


\begin{proof} $A$, as an algebra object (Q-system) in $\text{Bim}(A^{G})$, is isomorphic to the Q-system realization $|L|$ of the function algebra $\text{Fun}(G)\in \text{Rep}(G)\subseteq \mathcal{Z}(\text{Rep}(G))$, and thus we have that the group $\text{Aut}(A|A^{G})\le \text{Aut}_{\text{Rep}(G)}(A)$ of automorphisms of $A$ that are the identity on $A^{G}$, is isomorphic to $\text{Aut}_{\text{Rep}(G)}(L)=G$. But the group action $g\mapsto \alpha_{g}$ is an injective homomorphism $G\rightarrow \text{Aut}(A|A^{G})\le G$, which therefore must also be surjective and we must have that $\text{Aut}(A|A^{G})=G$ since $G$ is finite. 

Furthermore, since $\text{Hilb}_{f.d.}(G)$ is the dual category of $\text{Rep}(G)$ associated to the Lagrangian $L$, we have that the linear span $\mathbbm{C}[\text{Aut}_{\text{Rep}(G)}(L)]=\text{End}_{\text{Rep}(G)}(L)$, by \cite{MR4357481} 
This proves part $(1)$. 
Part (2) follows from \cite{Schatz24}.
\end{proof}

\bigskip

Our next goal is to prove Corollary \ref{classificationcor} from the introduction. This requires us to review the index for categorical dualities of fusion spin chains, introduced in \cite{JL24}. Given an inclusion of C*-algebras $A\subseteq B$ with unique tracial state $\tau$, we can build an associated inclusion of von Neumann $\rm{II}_{1}$ factors $N=A^{\prime \prime}\subseteq M=B^{\prime \prime}$, where the completions are taken in the Hilbert space representation $L^{2}(A, \tau)$. Then the Jones index $[M:N]$ is a real number that satisfies several nice properties \cite{MR0696688}. Since $A,B$ are assumed to have a unique trace, their $\rm{II}_{1}$ factor completion is unambiguous, and thus we can slightly abuse notation and write $[B:A]:=[M:N]$.

Let $\alpha$ be a bounded spread automorphism $\alpha\in \text{Aut}(A)$. For any $z\in \mathbbm{Z}$, define $A_{z}\subseteq A$ as the C*-algebra of observables localized in the interval $(-\infty, z]$. Fusion spin chains $A:=A(\mathcal{C}, X)$ satisfy the unique trace property, as do all the algebras $A_{z}$.

Now choose  $x,y\in \mathbbm{Z}$ so that $y\le x$, $A_{y}\subseteq \alpha(A_{x})$, which can be accomplished due to the bounded spread condition.

Now we can define the index

$$\text{Ind}(\alpha):= \left(\frac{[\alpha(A_{x}):A_{y}]}{[A_{x}:A_{y}]}\right)^{\frac{1}{2}}.$$

For ordinary spin chains, this agrees with the GNVW index \cite[Proposition 4.2]{JL24}. Our goal is to show that for the group case, if a $G$-duality $\alpha$ duality admits an extension to a QCA $\widetilde{\alpha}$, then $\text{Ind}(\alpha)=\text{Ind}(\widetilde{\alpha})$. This would imply that if $\text{Ind}(\alpha)=1$, $\widetilde{\alpha}$ is a finite-depth circuit. 

\begin{prop}\label{indexlemma}
Let $\widetilde{\alpha}:A\rightarrow A$ be a QCA such that $\widetilde{\alpha}(A^{G})=A^{G}$, and let $\alpha:=\widetilde{\alpha}|_{A^{G}}$ Then $\text{Ind}(\widetilde{\alpha})=\text{Ind}(\alpha)$.
\end{prop}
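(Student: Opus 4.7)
The plan is to use multiplicativity of the Jones index across the intermediate subfactors $A_x^G$ and $A_y^G$, together with the fact that the half-line inclusions $A_z^G \subseteq A_z$ all have Jones index $|G|$. The latter is the essential analytic input: completing $A_z$ in its unique trace yields the hyperfinite $\mathrm{II}_1$ factor $R$, the on-site action of $G$ extends to an outer action on $R$ because the representation $U\colon G\to U(V)$ is faithful (so its infinite tensor product is outer), and the fixed-point subfactor of any outer finite-group action on a $\mathrm{II}_1$ factor has index exactly $|G|$ by Jones.

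Next I would choose $y\le x$ with $A_y\subseteq\widetilde{\alpha}(A_x)$, using bounded spread of $\widetilde{\alpha}$. Since $\widetilde{\alpha}$ is an automorphism that preserves $A^G$ and restricts to $\alpha$, one has
$$\widetilde{\alpha}(A_x)\cap A^G \;=\; \widetilde{\alpha}(A_x\cap A^G) \;=\; \alpha(A_x^G),$$
so intersecting with $A^G$ identifies $A_y^G\subseteq\alpha(A_x^G)\subseteq\widetilde{\alpha}(A_x)$. The paragraph above then gives $[A_x:A_x^G]=[A_y:A_y^G]=|G|$, and by applying $\widetilde{\alpha}$ also $[\widetilde{\alpha}(A_x):\alpha(A_x^G)]=|G|$.

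Now the two chains $A_y^G\subseteq A_y\subseteq A_x$ and $A_y^G\subseteq A_x^G\subseteq A_x$ of $\mathrm{II}_1$ subfactors, together with multiplicativity of the Jones index, yield
$$|G|\cdot [A_x:A_y] \;=\; [A_x:A_y^G] \;=\; |G|\cdot [A_x^G:A_y^G],$$
so $[A_x:A_y]=[A_x^G:A_y^G]$. The identical argument with the chains $A_y^G\subseteq A_y\subseteq \widetilde{\alpha}(A_x)$ and $A_y^G\subseteq \alpha(A_x^G)\subseteq \widetilde{\alpha}(A_x)$ gives $[\widetilde{\alpha}(A_x):A_y]=[\alpha(A_x^G):A_y^G]$. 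Dividing and taking square roots collapses the defining ratios of $\operatorname{Ind}(\widetilde{\alpha})$ and $\operatorname{Ind}(\alpha)$ to the same number, proving the proposition.

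The main obstacle is the first paragraph: rigorously establishing that the half-line fixed-point inclusion has index $|G|$. This requires identifying the half-line completion with $R$, verifying outerness of the extended $G$-action (which is standard but worth citing carefully), and confirming that the fixed-point algebra of the $C^*$-algebraic action is weakly dense in the fixed-point algebra of the von Neumann algebraic action, so that $(A_z^G)''=(A_z'')^G$ and Jones's index theorem applies. All the downstream steps are then purely formal manipulations with the multiplicativity of the Jones index.
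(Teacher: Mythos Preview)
Your proposal is correct and is essentially the paper's own argument: both proofs compare the two paths through the square $A_y^G\subseteq A_y\subseteq A_x$ versus $A_y^G\subseteq A_x^G\subseteq A_x$ (and the analogous $\widetilde{\alpha}$-twisted square), invoke multiplicativity of the Jones index, and cancel the common factor $[A_z:A_z^G]=|G|$. Your additional justifications---outerness of the infinite tensor product action, $(A_z^G)''=(A_z'')^G$, and the verification that $A_y^G\subseteq\alpha(A_x^G)$---are details the paper handles by citation or leaves implicit.
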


\begin{proof}

Let $x\le y$ be as in the definition of index. Since the $G$-representation is a strong tensor generator, it follows that $A_{x}, A_{y}, A^{G}_{x}, A^{G}_{y}, \widetilde{\alpha}(A_{x}), \alpha(A^{G}_{x})$ are are simple C*-algebras with unique trace. Then we have the inclusions 

$$
\begin{tikzcd}
    A^{G}_{x}\arrow[r,symbol=\subset]&A_{x}\\
    A^{G}_{y}\arrow[r,symbol=\subset]\arrow[u,symbol=\subset]&A_{y}\arrow[u,symbol=\subset]
\end{tikzcd}\ \ \text{and}\ \ 
\begin{tikzcd}
    \alpha(A^{G}_{x})\arrow[r,symbol=\subset]& \widetilde{\alpha}(A_{x})\\
    A^{G}_{y}\arrow[r,symbol=\subset]\arrow[u,symbol=\subset]&A_{y}\arrow[u,symbol=\subset]
\end{tikzcd}
$$

\noindent Since the indices are actually indices of subfactors, the index is multiplicative \cite{MR0696688}, hence 

$$[A_{x}:A^{G}_{x}][A^{G}_{x}:A^{G}_{y}]=[A_{x}:A_{y}][A_{y}:A^{G}_{y}]$$

and

$$[\widetilde{\alpha}(A_{x}):\alpha(A^{G}_{x})][\alpha(A^{G}_{x}):A^{G}_{y}]=[\widetilde{\alpha}(A_{x}):A_{y}][A_{y}:A^{G}_{y}].$$

\bigskip

\noindent But from standard results in subfactor theory (for example \cite{MR1473221, MR1642584}), $$[A_{x}:A^{G}_{x}]=[A_{y}:A^{G}_{y}]=[\widetilde{\alpha}(A_{x}):\alpha(A^{G}_{x})]=[A_{y}:A^{G}_{y}]=|G|.$$

\noindent Thus we have $[A^{G}_{x}:A^{G}_{y}]=[A_{x}:A_{y}]$ and $[\alpha(A^{G}_{x}):A^{G}_{y}]=[\widetilde{\alpha}(A_{x}):A_{y}]$. Thus $\text{Ind}_{G}(\alpha)=\text{Ind}(\alpha)$.

\end{proof}

\bigskip

\noindent \textbf{Proof of Corollary \ref{classificationcor}}. Suppose $\alpha:A^{G}\rightarrow A^{G}$ is a G-duality such that $\text{Ind}(\alpha)=1$ and $\text{DHR}(\alpha)=\text{Id}_{\mathcal{Z}(\text{Rep}(G))}$. Then by Theorem \ref{mainthm} there is a lift of $\alpha$ to an sQCA $\widetilde{\alpha}$ which can be chosen by \ref{thm:sQCA} to have trivial $H^{2}$ index. Furthermore, by Proposition \ref{indexlemma}, $\text{Ind}(\widetilde{\alpha})=1$, so $\widetilde{\alpha}$ is a finite-depth circuit. 

Now, for abelian $G$ in the regular representation, it was recently shown that globally symmetric circuits with trivial $H^{2}$ invariant are locally symmetric \cite[Theorem 1]{ma2024quantumcellularautomatasymmetric}. 
More generally, the corresponding statement for groups $G$, which are not necessarily abelian, is shown (implicitly\footnote{The argument stated there uses ancilla but can easily be modified to avoid this step by coarse-graining and applying Fell's absorption principle in the regular representation}) in \cite{long2024topologicalphasesmanybodylocalized}. This concludes the proof.

\section*{Acknowledgments} 
CJ and KS were supported by NSF DMS-2247202. 
DJW was supported in part by the Australian Research Council Discovery Early Career Research Award (DE220100625). 
This paper grew out of discussions that took place during the ``QCA group" at the workshop ``Fusion categories and tensor networks" at the American Institute of Mathematics in 2022. 
After completing the first arXiv version of this work we became aware of the related work \cite{ma2024quantumcellularautomatasymmetric}. 
We have now included comments on the relationship between our results and those of~\cite{ma2024quantumcellularautomatasymmetric}. We also thank Dominic Else for pointing out how the results of~\cite{long2024topologicalphasesmanybodylocalized} can be used to obtain Corollary \ref{classificationcor}, which is a stronger statement than in earlier versions.

\bigskip

\bigskip

\bibliographystyle{alpha}
\bibliography{Reference}

\end{document}